\newtheorem{remark}{Remark}
\newtheorem{theorem}{Theorem}
\newtheorem{lemma}{Lemma}
\newtheorem{proof}{Proof}
\title{Cloud Control of Connected Vehicle under Bi-directional Time-varying delay: An Application of Predictor-observer Structured Controller
}
\author{
  Ji-An Pan \\
  School of Vehicle and Mobility \\
  Tsinghua University \\
  Beijing\\
  \texttt{pja17@mails.tsinghua.edu.cn} \\
   \And
  Qing Xu, Keqiang Li, Jianqiang Wang \\
  School of Vehicle and Mobility \\
  Tsinghua University\\
  Beijing\\
  \texttt{\{qingxu,likq,wjqlws\}@tsinghua.edu.cn} \\  
  \AND
  Chunying Yang \\
  School of Vehicle and Mobility \\
  Tsinghua University\\
  Beijing\\
  \texttt{SY1913312@buaa.edu.cn}
}
\begin{document}
\maketitle

\begin{abstract}
This article is devoted to addressing the cloud control of connected vehicles, specifically focusing on analyzing the effect of bi-directional communication-induced delays. To mitigate the adverse effects of such delays, a novel predictor-observer structured controller is proposed which compensate for both measurable output delays and unmeasurable, yet bounded, input delays simultaneously. The study begins by novelly  constructing an equivalent delay-free inter-connected system model that incorporates the Predictor-Observer controller, considering certain delay boundaries and model uncertainties. Subsequently, a stability analysis is conducted to assess the system's robustness under these conditions.
Next, the connected vehicle lateral control scenario is built which contain high-fidelity vehicle dynamic model. The results demonstrate the controller's ability to accurately predict the system states, even under time-varying bi-directional delays.
Finally, the proposed method is deployed in a real connected vehicle lateral control system. Comparative tests with a conventional linear feedback controller showcase significantly improved control performance under dominant bi-directional delay conditions, affirming the superiority of the proposed method against the delay.
\end{abstract}

\keywords{Prediction based control \and  Cloud controller \and , Connected Vehicle\and Control under delay\and Predictor-observer structured controller}

\section{Introduction}

The advent of fast-growing connection technology has led to the integration of cloud-driven functionalities into autonomous driving systems. ``Cloud control of Intelligent and Connected Vehicle (ICV)\cite{chang2020,chang2020jienengxuebao}" is one of the road maps proposed under that background. The fundamental idea behind the road map entails the establishment of a dedicated transportation-exclusive public cloud platform, capable of off-loading select on-board perception, decision-making, and even control functions to the cloud system. This approach empowers the cloud control platform to offer various levels of driving assistance services\cite{ITU_NDA,bian2019cooperation} to ICVs operating within its coverage area.

Indeed, the integration of cloud control into ICV systems offers promising benefits, the increased number of communication routes in the cloud control structure may result in undesirable delays in the control loop, which is harmful to the vehicle control performance\cite{chang2019_delay_analysis,cao2021networked}. To tackle this challenge, the design principles of the controller require further investigation. Given that the dynamics of ICV control under regular working conditions can often be approximated by a linear model\cite{chang2019_delay_analysis,pan2023vehicle}, the problem at hand can be abstracted as a delayed linear feedback control problem, which is a classical problem in the filed of Network Control System (NCS).

It is generally reasonable to divide all those delay-handling control strategy into two categories: active and passive methods. For the passive methods, as it name suggests, is to analyzing the system stability containing a certain length of delay and designing robust enough feedback controllers to resist the delay.
State augmentation\cite{zhang2005new,zhang2008new,shi2009output} method and Lyapunov-functional method\cite{li2011new,kwon2013stability,wenTimeDelay14} are two typical and well-studied methods to analyze the stability of the linear control system under certain delay. Since the core idea of the passive method is to search a feasible solution within the space which is compressed by the delay, they might not guarantee workable performance under longer delay cases.
The active methods, in contrast, design more active structure to compensate the impact brought by the delay. Prediction-based control is one of the intuitive method to actively handle the system with longer delay, largely due to the core idea of the method is to adopt the predicted future system states to build the control strategy to compensate for the unstable trend brought by the delay\cite{deng2022predictor}. In 1950s, ``smith predictor"  was firstly proposed in the industry\cite{smith_pre}, to stabilize system contains input dead time. The basic logic of the smith predictor is to introduce a inner system to predict the future system states. Various studies are inspired by the same idea to design a prediction-based controller for the linear feedback system\cite{liu2007design,zhang2012design,gonzalez2012robustness}. Among those prediction methods, the predictor-observer structured controller is a widely discussed prediction structure. As it perfectly adopts the structure of the conventional Luenberger observer and uses linear prediction result to improve the observer's performance. \cite{hao2019output} considered the problem with constant input delay, and proposes sequential predictor-observer to compensate the delay. \cite{gonzalez2019robust} studies the problem with time-varing input delay and proposes delay-free inter-connected system construction to analyze the predictor-observer system's stability. Under the same framework, \cite{gonzalez2019gain} discussed the situation with bi-directional but fully measurable delay. It could be seen that there have been various theory on handling the delay with the prediction-based controller, however, most of them focusing on the theoretically stability analyzation and neglecting the feasibility of certain assumption in the real control plant. Some experimental-oriented studies have realize the prediction-based delay compensation controller in the real plant, however, most of those experimental plants are with simple dynamic, like AC servo motor\cite{lai2009design}, electronic throttle\cite{zhang2018new}, ball and beam system\cite{zhang2012design}, etc. The implementation method and performance of the prediction strategy under larger scale system like a connected vehicle demands to be further studied and evaluated.

This article centers on the challenge of bi-directional delay in cloud control of Intelligent and Connected Vehicles (ICV). A pioneering predictor-observer structured controller is introduced, requiring solely output delay measurements. The key contributions are as follows:

\begin{enumerate}[1)]
	\item A pioneering predictor-observer structured controller is developed to perform delay compensation using solely single directional delay measurements in the connected vehicle cloud control system. The innovative construction of a delay-free interconnected system is introduced to assess stability under specified delay boundaries and model uncertainty, along with corresponding stability criteria.
	\item  The proposed predictor-observer structured controller is rigorously validated via simulations utilizing a high-fidelity vehicle dynamic model. Results confirm its prediction accuracy in the presence of time-varying delay. Moreover, the controller is uniquely applied to a real-world connected vehicle lateral control scenario, highlighting superior performance and enhanced delay resistance compared to conventional state feedback control.
\end{enumerate}
\section{ICV cloud control system modeling}
In the context outlined in the preceding section, the centralized cloud control system operates ideally by gathering vehicle states and surrounding environmental data through a distributed sensor network. This information is then processed and transmitted to the cloud server's decision and control function model. Subsequently, lateral and longitudinal control commands are generated and wirelessly conveyed to the controlled connected vehicle, where they are executed by specific on-board actuators. However, within this intricate process, delays can manifest during networked sensing and control command distribution, as depicted in Fig.~\ref{fig:sch_of_ICV_cloud_control}(a).
From a feedback control perspective, the delay can be classified as output delay for the sensor-controller connection and as input delay for the controller-actuator connection. Notably, the proposed strategy can be extended to a more practical scenario where the connected vehicle solely relies on cloud-sourced sensing information and generates control inputs using an on-board controller, as illustrated in Fig.~\ref{fig:sch_of_ICV_cloud_control}(b). The key distinction lies in considering the physical actuation delay as the pertinent control delay.
Given the relatively weak coupling between lateral and longitudinal dynamics under normal operating conditions, lateral and longitudinal control dynamic is considered separately and could actually cover various scenarios\cite{chang2019_delay_analysis}. It should be stressed in particular that only the lateral control scenario is discussed and tested in this article. The linearized longitudinal controller is designed with the same logic and thus omitted to save space, and the model details could be found in \cite{pan2023cloud,pan2023vehicle,xu2022reachability}.

\begin{figure}[htb]\centering
	\includegraphics[width=1\linewidth]{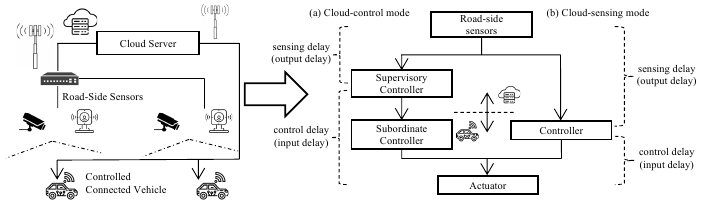}
	\caption{Schematic of the connected vehicle cloud control system under (a) cloud control mode (b) cloud sensing mode}\label{fig:sch_of_ICV_cloud_control}
\end{figure}

\subsection{Vehicle Control Dynamics}
The lateral control task could be interpreted as controlling the vehicle steering to track a predefined trajectory. According to existing research\cite{ackermann1995linear,enache2009driver}, neglecting the external disturbance brought by the wind and other factors, the lateral control dynamic could be linearized as follows after being discretized with a zero-order holder (ZOH):
\begin{align}
	 x(k+1) &=A_{lat} x(k)+ B_{lat} \delta (k) + P_{r} \rho (k), \label{eq:lat_sys_dynamic}\\
	 \delta (k)&=K_{lat}x(k), \label{eq:lateral_ip_def} \\
	 x(k)&=\left[ \beta (k),r(k), \psi _{L}(k),y_{L}(k) \right] ^\mathrm{T}, \nonumber
\end{align}
where $A_{lat}$ is the state matrix, $B_{lat}$ is the input matrix, $P_r$ is the feed-forward input matrix, $K_{lat}$ is the feedback control gain, $x$ is the state vector, and it is composed of: the vehicle slide angle $\beta$ , the yaw rate $r$, the relative yaw angle $\psi_L$, and the lateral offset at preview distance $y_L$, $\delta$ is the expected steering angle of the front wheel, $\rho$ is the curvature of the trajectory. The discrete system matrices $A_{lat}, B_{lat}$ and $P_r$ are determined by their continuous forms, which are given by:
\begin{align}
    A_{lat_c}=&\left[\begin{matrix}
		- \frac{c_f+c_r}{mv} & -1+ \frac{c_rl_r-c_fl_f}{mv^2} & 0 & 0 \\
	\frac{c_rl_r-c_fl_f}{I_z} & - \frac{c_fl_{f}^{2}+c_rl_r^2}{vI_z} & 0 & 0 \\
	0 & 1 & 0 & 0 \\
	v & l_s & v& 0
\end{matrix}\right],  \label{eq:lat_mtx_def_1}  \\
B_{lat_c}=&\left[\frac{c_f}{mv},\frac{c_{f}l_{f}}{I_{z}},0,0\right]^T ,  \label{eq:lat_mtx_def_2}  \\
P_{r_c}=&\left[0,0,-v,-vl_{s}\right]^T, \label{eq:lat_mtx_def_3}
\end{align}
where $c_f$ and $c_r$ are the equivalent cornering stiffness of the front and rear axis, respectively. $l_f$ and $l_r$ are the distance from $CG$ (Center of Gravity) to front and rear axle, $ m$ is the vehicle total mass, $I_z$ is the vehicle yaw moment of inertia, $v$ is the longitudinal driving speed, and $l_s$ is the preview distance.

\subsection{Control Problem Statement}
As both the lateral and longitudinal dynamics could be simplified as a linear feedback control system, the problem could be abstracted as a linear output feedback control problem under time-varying input and output delay. The linearized system dynamic with bounded delay and model uncertainty could be indicated as follow:
\begin{align}
    x(k+1)&=(A+\Delta A(k)) x(k)+(B+\Delta B(k))u(k-d_k^I), \label{eq:abs_sys_1} \\
    y(k)&=Cx(k-d_k^O), \label{eq:abs_sys_2} \\
    u\left(k\right)&=f(y(k)), \label{eq:abs_sys_3}
\end{align}
where $f(y(k))$ is the feedback strategy to be determined, $d_k^I\in\left[h_1^I,h_2^I\right]$ is the bounded time-varying input delay, $d_k^O\in\left[h_1^O,h_2^O\right]$ is the bounded time-varying output delay. $\Delta A(k)$ and $\Delta B(k)$ are time-varying model uncertainties, and are also assumed to be bounded as \cite{han2001robust}:
\begin{equation}
    (\Delta A(k),\Delta B(k))= \gamma E \Delta(k)(H_A,H_B),
\end{equation}
where $\Delta(k)$ is a time-varying matrix satisfying $\Delta(k)^T\Delta(k)\leq I$,\ $E,H_A,H_B$ are time-constant matrices to determine the size of the model uncertainty. $\gamma$ is also a positive scalar that determine the size of the uncertainty, and could be used for further optimization. It is also assumed that the output feedback system is controllable and observable.
\section{Controller design and stability analysis}
As (\ref{eq:abs_sys_1}) and (\ref{eq:abs_sys_2}) already provides the delayed linear feedback system modeling for the concerned system, the problem comes as designing the feedback strategy $u(k)=f(y(k))$. Therefore, in this section, the predictor-observer structured delay compensation controller is introduced, and the stability analysis for the close-loop system is also conducted.
\subsection{Predictor-observer structured Controller}
Corresponding to (\ref{eq:abs_sys_1}) and (\ref{eq:abs_sys_2}), designing a controller with predictor-observer structure as (\ref{eq:ctl_u_eq})-(\ref{eq:z_pre_dyn}), which only demands measurement on output delay.
\begin{align}
    u(k)&=K{\hat{Z}}(k) \label{eq:ctl_u_eq},\\
    {\hat{Z}}(k+1)&=A{\hat{Z}}(k)+Fu(k)+L\left(A^{d_k^O}{\bar{y}}(k)-C{\hat{Z}}(k)\right), \label{eq:z_pre_dyn}
\end{align}
where:
\begin{align}
F=&(A^{{-h}_1^I}+A^{-h_2^I})\frac{B}{2},  \\
    {\bar{y}}(k)=&y(k)+CA^{{-d}_k^O}\left(\Phi_k\left(h_1^I\right)+\Phi_k\left(h_2^I\right)+{\bar{\mathrm{\Omega}}}_k\left(d_k^O\right)\right) \label{eq:y_est}, \\
    {\bar{\mathrm{\Omega}}}_k\left(d_k^O\right)=&\frac{1}{2}\sum_{i=0}^{d_k^O-1}A^{d_k^O-i-1}Bu(k-d_k^O+i-h_1^I) +\frac{1}{2}\sum_{i=0}^{d_k^O-1}A^{d_k^O-i-1}Bu(k-d_k^O+i-h_2^I), \label{eq:omega_est} \\
        \Phi_k\left(h_f^I\right)=&\frac{1}{2}\sum_{i=0}^{h_f^I-1}{A^{-i-1}Bu(k-h_f^I+i)},\ f=1,2.
\end{align}
where $K $ and $L$ are the controller and observer gains to be determined. It is necessary to define the equivalent state variable $Z(K)$, which subject to Artstein's state transformation as:
\begin{equation}
Z(K)=x(k)+\Phi_k\left(h_1^I\right)+\Phi_k\left(h_2^I\right), \label{eq:z_def}
\end{equation}
and $\hat{Z}(k)$ is an estimation of $Z(k)$ inside the observer.

The prediction logic of the proposed method could be indicated as Fig.~\ref{fig:prd_logic}.
From the timing relation indicated through the Figure and the prediction function, (\ref{eq:z_pre_dyn}), (\ref{eq:y_est}) and (\ref{eq:omega_est}) indicate that the predictor-observer is trying to conduct prediction of states $Z(k)$, based on the state information coming from $d_k^O$ steps before. It should also be conscious of the fact that, $Z(k)$ is a reflection on future system states (the future moment when the control signal finally arrives at the actuator). Therefore, (\ref{eq:ctl_u_eq}) indicates that the controller determines the control input based on future system states. This could also clarify how the proposed strategy compensates for the input delay actively. (\ref{eq:omega_est}) indicates that the prediction corresponding to compensate input delay and output delay is coupled. Thus, as the input delay is assumed to be unmeasurable, the estimation of the input delay will influence the prediction corresponding to the output delay, and the model uncertainty will worsen this issue. Thus, in the next sub-section, stability will analyzed considering this issue.

\begin{figure}[htb]
    \centering
    \includegraphics[width=0.8\linewidth]{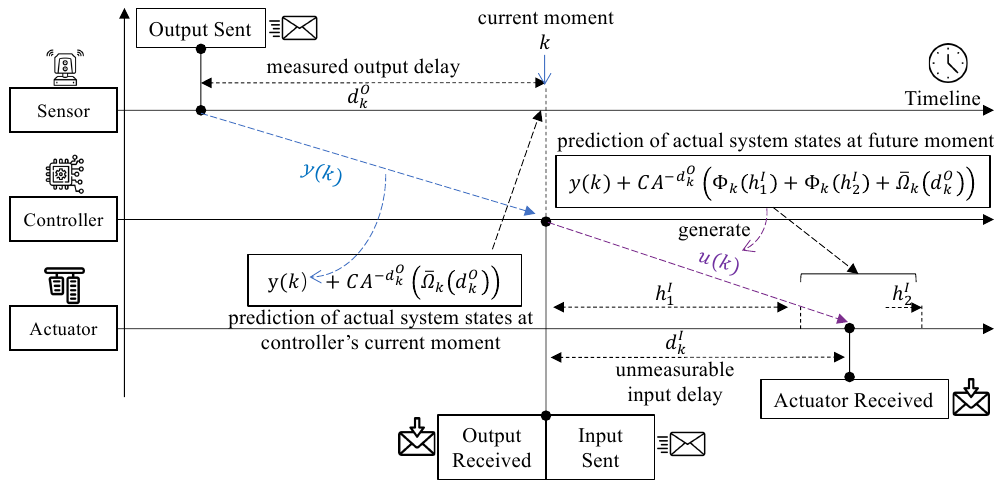}
    \caption{Prediction logic of the proposed method.}
    \label{fig:prd_logic}
\end{figure}
\begin{remark}
    According to the timing relation from the Fig.~\ref{fig:prd_logic}, it could be aware that, the output delay $d_k^O$ is technically observable at the moment when the controller determine the control input, whereas the input delay doesn't. The input delay is only possible to be measured when the control input arrives at the actuator, and will introduce extra process on achieving the measurement and corresponding control input selection. Thus, the assumption in this article is more rational and useful under real control scenario, comparing similar study\cite{gonzalez2019gain} demands full delay measurement.
\end{remark}
\begin{remark}
    There is an update on the math operation of the predictor-observer. The original one seems not robust with long delay range, and the $A^h$ operation may result in infinite integration. Thus, based on the both simulation and real test experience, a new predictor-observer function is proposed, which is pretty stable to the largely varying delay.
\end{remark}
\subsection{Inter-connected system modeling and stability analysis}
To analyze the input-output stability of the predictor-observer structured feedback control system within specific delay boundaries and under model uncertainty, the conventional procedure involves constructing an inter-connected system. As expounded in the preceding section, the intricate interplay of input and output delay compensation adds complexity, rendering the construction of a delay-free interconnected system less straightforward than in previous studies.  
 Therefore, the main theorem is introduced to establish a description of a delay-free interconnected system for the closed-loop system defined by (\ref{eq:abs_sys_1})-(\ref{eq:abs_sys_2}), coupled with the prediction-based control strategy outlined in (\ref{eq:ctl_u_eq})-(\ref{eq:y_est}).

\begin{theorem}\label{thm:int_cont_sys_rep}
    The closed-loop system formed by (\ref{eq:abs_sys_1})-(\ref{eq:abs_sys_2}) and prediction-based control law (\ref{eq:ctl_u_eq})-(\ref{eq:y_est}) could be modeled equally as the following interconnected system:
    \begin{align}
M_s&: \begin{cases}
	\bar{X}(k+1)&=\bar{A}\bar{X}(k)+\bar{B}\bar{W}(k), \\
	\bar{Y}(k)&=\bar{C}\bar{X}(k)+\bar{D}\bar{W}(k),
\end{cases} \label{int_con_sys_for_1} \\
\Delta &: \bar{W}(k)=\bar{\Delta}(k)\bar{Y}(k) , \label{int_con_sys_rev_1}
\end{align}
where $\bar{\Delta}(k)$ is an uncertain system satisfying $\bar{\Delta}(k)^T\bar{\Delta}(k)\leq I, \forall k \geq 0$. The augmented state/output/input are defined as:
\begin{align*}
    \bar{X}(k)&=\left[{Z(k)}\ {x(k-1)}\ u(k-1)\ e(k)\right]^T\\
    \bar{Y}(k)&=\left[
    \begin{smallmatrix}
y_{\Delta}(k)& v(k)&\widetilde{v}(k)&y_p(k)&v(k)&v(k)&v(k)&v(k)&v(k)&v(k)&q(k)
    \end{smallmatrix}\right]^T  \\
    \bar{W}(k) &=\left[
    \begin{smallmatrix}
        w_{\Delta}(k)&w_d(k)&{\widetilde{w}}_{d}(k)&w_p(k)&w_1(k) &w_2(k)&w_3(k)&w_4(k)&w_5(k)&w_6(k)& w_7(k)
    \end{smallmatrix}\right]^T
\end{align*}
where:
\begin{align*}
    v(k)=&u(k)-u(k-1),\ q(k)=x(k)-x(k-1) \\
         {\widetilde{v}}(k)=&\sum_{i=0}^{d_k^O-1}{A^{d_k^O-i-1}B}v(k), \\
         {\widetilde{w}}_{d}(k)=&\sum_{i=0}^{d_k^O-1}{A^{d_k^O-i-1}B}w_{d}(k-d_k^O+i)
\end{align*}
and:
\begin{align*}
    \bar{A} &=\left[\begin{smallmatrix}
        A+FK & 0& 0& -FK\\I&0&-\Upsilon &0\\ K & 0&0&-K \\ 0&0&0&A-LA^{d_k^O}CA^{{-d}_k^O}
    \end{smallmatrix}\right],   \\
    \bar{B} &=\left[\begin{smallmatrix}
        \gamma E & \frac{\tau}{2}B & 0&0&0&0&0&0&0&0&0 \\
        0&0&0&0&0&\mu_2 &0&0&0&0&0 \\
        0&0&0&0&0&0&0&0&0&0&0 \\
        \gamma E & \frac{\tau}{2}B &0 & LA^{d_k^O}CA^{{-d}_k^O}\widetilde{\gamma}\widetilde{E} &0&0& 0&0&0&0&0\end{smallmatrix}\right], \\
       \bar{C} &=\left[\begin{smallmatrix}
           H_A&0&H_B-H_A\Upsilon&0 \\ K & 0&-I & -K \\
          \beta_2B K & 0&-\beta_2B & -\beta_2BK \\0 & \beta_1H_A & \beta_2H_B+\beta_3H_AB &0\\
          K & 0 &-I &-K \\ K & 0 &-I &-K  \\ K & 0 &-I &-K \\ K & 0 &-I &-K \\ K & 0 &-I &-K \\K & 0 &-I &-K \\I&-I&-\Upsilon &0
       \end{smallmatrix}\right],  \\
       \bar{D} &=    \left[\begin{smallmatrix}
        0&\frac{\tau H_B}{2} &0&0&-\mu_1H_B & \mu_2H_A &0&0&0&0&0\\
        0&0&0&0&0&0&0&0&0&0&0\\
        0&0&0&0&0&0&0&0&0&0&0\\
        0&0&I&0&0&0&-\frac{1}{2}\mu_3&-\frac{1}{2}C_{d_k^O}^1\mu_4&-\frac{1}{2}\mu_5&-\frac{1}{2}C_{d_k^O}^1\mu_6&-\mu_7\beta_1H_A\\
        0&0&0&0&0&0&0&0&0&0&0\\
        0&0&0&0&0&0&0&0&0&0&0\\
        0&0&0&0&0&0&0&0&0&0&0\\
        0&0&0&0&0&0&0&0&0&0&0\\
        0&0&0&0&0&0&0&0&0&0&0\\
        0&0&0&0&0&0&0&0&0&0&0\\
        0&0&0&0&0&\mu_2&0&0&0&0&0\\
    \end{smallmatrix}\right].
\end{align*}
and:
\begin{align*}
    \mu_1=& \Vert \frac{1}{2}\sum_{j=1}^{2}\sum_{f=1}^{h_j^I-1}z^{-f} \Vert _{\infty}, \ \mu_2=\Vert \frac{1}{2}\sum_{j=1}^{2}\sum_{i=0}^{h_j^I}\sum_{f=1}^{h_j^I-i-1}{A^{-i-1}Bz^{-f}}  \Vert _{\infty}, \\
    \mu_3=&\Vert \sum_{i=0}^{d_k^O-1}\sum_{f=1}^{d_k^O+h_1^I-i-1}{A^{d_k^O-i-1}H_Bz^{-f}} \Vert _{\infty},\ \mu_4=\Vert \sum_{i=0}^{d_k^O-1}\sum_{f=1}^{d_k^O+h_1^I-i-1}{A^{d_k^O-i-2}H_A B z^{-f}} \Vert _{\infty},  \\
    \mu_5=&\Vert \sum_{i=0}^{d_k^O-1}\sum_{f=1}^{d_k^O+h_2^I-i-1}{A^{d_k^O-i-1}H_Bz^{-f}} \Vert _{\infty},\ \mu_6 = \Vert \sum_{i=0}^{d_k^O-1}\sum_{f=1}^{d_k^O+h_2^I-i-1}{A^{d_k^O-i-2}H_A B z^{-f}} \Vert _{\infty}, \\
   \mu_7 =& \Vert \sum_{f=0}^{d_k^O-1}z^{-f} \Vert _{\infty},\\
   \beta_1=& C_{d_k^O}^1A^{d_k^O-1} , \beta_2=\sum_{i=0}^{d_k^O-1}A^{d_k^O-i-1}, \ \beta_3=\sum_{i=0}^{d_k^O-1}\left(C_{d_k^O}^1A^{d_k^O-i-2}\right),\\
   \Upsilon=&\frac{1}{2}\sum_{j=1}^{2}\sum_{i=0}^{h_j^I}{A^{-i-1}B},
\end{align*}
\end{theorem}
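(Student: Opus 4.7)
The plan is to derive the interconnected representation by peeling off every time-varying or uncertain contribution in the closed loop and re-expressing it as a norm-bounded perturbation channel $\bar{W}(k)=\bar{\Delta}(k)\bar{Y}(k)$, leaving only the nominal delay-free skeleton inside $\bar{A}$. The two basic ingredients are Artstein's reduction \eqref{eq:z_def}, which absorbs the nominal part of the input delay into $F$, and the estimator error $e(k)=Z(k)-\hat Z(k)$, which after a short calculation absorbs the nominal part of the output delay into the block $A-L A^{d_k^O}C A^{-d_k^O}$. Everything that is left over --- model uncertainty, input-delay mismatch between $d_k^I$ and the fixed bracket $[h_1^I,h_2^I]$, and the cross-coupling produced by $\bar\Omega_k(d_k^O)$ --- is then routed through the $\bar\Delta(k)$ block.

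The first step is to substitute \eqref{eq:z_def} into \eqref{eq:abs_sys_1} and verify that the nominal dynamics of $Z$ reduce to $Z(k+1)=AZ(k)+Fu(k)$. The residual is of two kinds: the uncertainty $\gamma E\Delta(k)(H_Ax(k)+H_Bu(k-d_k^I))$, which becomes the $w_{\Delta}$ column of $\bar B$, and the input-delay mismatch $B\bigl[u(k-d_k^I)-\tfrac12(u(k-h_1^I)+u(k-h_2^I))\bigr]$, which I would telescope as a finite sum of increments $v(k)=u(k)-u(k-1)$ over at most $\tau=h_2^I-h_1^I$ steps, producing the $w_d$ channel with gain $\mu_1$. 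The second step is to compute $e(k+1)$. Here I would use \eqref{eq:y_est}--\eqref{eq:omega_est} to show that $\bar\Omega_k(d_k^O)$ has been chosen precisely so that the nominal innovation $A^{d_k^O}\bar y(k)-C\hat Z(k)$ is proportional to $e(k)$, which closes the last block row of $\bar A$. Perturbing this by the model uncertainty generates four further families of residuals: the direct $\gamma E\Delta(k)(\cdot)$ term acting at the observer (paired with $\mu_2$), the cross double sums indexed by $d_k^O+h_j^I$ for $j=1,2$ (yielding $\mu_3,\mu_4,\mu_5,\mu_6$), the increment expansion $x(k)-x(k-d_k^O)=\sum q(\cdot)$ (yielding $\mu_7$), and the $\beta_1,\beta_2,\beta_3$ propagation factors coming from expanding the output-side nominal trajectory. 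The closed-loop substitution $u(k)=K\hat Z(k)=KZ(k)-Ke(k)$ together with the trivial lifts of $x(k-1)$ and $u(k-1)$ then fill in the remaining rows of $\bar A$ and $\bar C$.

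The third and final step is packaging. Each row of $\bar Y(k)$ has by construction the property that the finite time-varying sum of shift operators which acts on it before reappearing as the corresponding row of $\bar W(k)$ has $\ell_\infty$ induced norm bounded by the matching $\mu_i$ (or by a product of a $\beta_j$ with a $\mu_i$); renormalizing each shift-sum by that factor absorbs all the time variation into a single block-diagonal operator $\bar\Delta(k)$ satisfying $\bar\Delta(k)^T\bar\Delta(k)\le I$. The hard part will be the bookkeeping around $\bar\Omega_k(d_k^O)$: because this single compensator simultaneously telescopes perturbations from the input-delay mismatch and from the model uncertainty, the $h_1^I$- and $h_2^I$-indexed double sums have to be kept separate, and the binomial factor $C_{d_k^O}^1$ inside $\beta_1,\beta_3$ and inside the $\mu_4,\mu_6$ columns has to be retained; a loose accounting here would make the $\mu$-weighting of $\bar D$ too small to certify $\|\bar\Delta\|\le 1$, and would therefore invalidate the whole equivalence.
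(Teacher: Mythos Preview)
Your overall strategy matches the paper's proof: Artstein reduction for $Z$, observer error $e(k)=Z(k)-\hat Z(k)$, then peel off every delay- or uncertainty-induced residual into a norm-bounded channel via increment telescoping and $\mathcal{Z}$-transform normalization. Two points, however, are mislabeled or missing and would prevent you from landing on the stated $\bar A,\bar B,\bar C,\bar D$.

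First, the channel/gain attributions are scrambled. The input-delay mismatch $u(k-d_k^I)-\tfrac12\sum_j u(k-h_j^I)$ is the $w_d$ channel, but its gain is $\tfrac{\tau}{2}$ (with $\tau=h_2^I-h_1^I$), not $\mu_1$; this is why $\tfrac{\tau}{2}B$ sits in the first and last block-rows of $\bar B$. The constants $\mu_1,\mu_2$ do not appear at the observer: they live in the output $y_\Delta(k)=H_Ax(k)+H_Bu(k-d_k^I)$ after you rewrite $x(k)=Z(k)-\sum_j\Phi_k(h_j^I)$ and $u(k-h_j^I)=u(k-1)-\sum v(\cdot)$; that is where $w_1,w_2$ (and $\mu_1,\mu_2$) are born, and why $\mu_2$ also shows up in the $x(k)$ row of $\bar B$ and the $q(k)$ row of $\bar D$.

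Second, the paper does not route the observer-side residuals directly. It introduces an intermediate \emph{prediction error} $ep(k)=Z(k)-\bar Z(k)$ (equivalently $x(k)-\bar x(k)$), where $\bar Z(k)$ is what the predictor actually reconstructs from $y(k)$ via $\bar\Omega_k$. The innovation then reads $\bar Z(k)-\hat Z(k)=e(k)-ep(k)$, so the observer error update picks up a single extra term $LA^{d_k^O}CA^{-d_k^O}\,ep(k)$. To bound $ep(k)$ the paper expands $(A+\Delta\tilde A)^{d_k^O}$ to first order, introducing an \emph{averaged} uncertainty $(\Delta\tilde A,\Delta\tilde B)=\tilde\gamma\tilde E\Delta_p(k)(H_A,H_B)$, and writes $ep(k)=\tilde\gamma\tilde E\,w_p(k)$ with $w_p(k)=\Delta_p(k)y_p(k)$. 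It is \emph{inside} $y_p(k)$ that $\beta_1,\beta_2,\beta_3$, the double sums yielding $\mu_3,\ldots,\mu_6$, the $x(k-d_k^O)$ telescoping yielding $\mu_7$, and the delayed $w_d$ terms repackaged as $\tilde w_d$ all appear. Without isolating $ep(k)$ and the separate $(w_p,y_p)$ pair with its own bound $\tilde\gamma\tilde E$, you cannot produce the $LA^{d_k^O}CA^{-d_k^O}\tilde\gamma\tilde E$ entry of $\bar B$ nor the fourth row of $\bar C,\bar D$. Note also that the first-order truncation (drop $\Delta\tilde A^2$ and treat the $\bar\Omega$-mismatch as the same order of smallness) is an explicit assumption in the paper; you will need it to keep the channel count finite.
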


\begin{proof}
    This is an extension deduction from result in \cite{pan2023cloud}, and adopt the math tools from \cite{gonzalez2019robust} and \cite{gonzalez2013robust}. See \ref{apx:1} for details.
\end{proof}

Since the predictor-observer structured feedback control system dynamic could be simplified as the interconnected system representation from Theorem~\ref{thm:int_cont_sys_rep}, various routine tools could adopted to conduct the stability condition. A optional lemma to conduct the stability analysis of the inter-connected system is provided as follow.
\begin{lemma}\label{lm_stab_cond}
\cite{pan2023cloud}Considering an interconnected system with structure as:
\begin{align}
M_s&: \begin{cases}
	\bar{X}(k+1)&=\bar{A}\bar{X}(k)+\bar{B}\bar{W}(k), \\
	\bar{Y}(k)&=\bar{C}\bar{X}(k)+\bar{D}{\bar{W}}(k),
\end{cases} \label{int_con_sys_for_1_lm} \\
\Delta &: \bar{W}(k)=\bar{\Delta}(k)\bar{Y}(k).  \label{int_con_sys_rev_1_lm}
\end{align}
The system is robustly asymptotically stable with decay rate $\beta$ if there exist symmetric matrix $P>0$ such that the following matrix inequality is satisfied:
\begin{equation}
    \left[\begin{matrix}-\beta^2P&0&{\bar{A}}^TP&{\bar{C}}^T\\\ast&-I&{\bar{B}}^TP&{\bar{D}}^T\\\ast&\ast&-P&0\\\ast&\ast&\ast&-I\\\end{matrix}\right]<0.  \label{MI_cond}
\end{equation}
\end{lemma}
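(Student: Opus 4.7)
The plan is to treat (\ref{MI_cond}) as a dissipation inequality in disguise: it is a small–gain / bounded real type condition on the map $M_s$ from $\bar{W}$ to $\bar{Y}$ with respect to the quadratic storage $V(\bar{X})=\bar{X}^{T}P\bar{X}$, and the uncertainty block $\bar{\Delta}$ satisfies the matching supply rate $\bar{W}^{T}\bar{W}\le \bar{Y}^{T}\bar{Y}$. The conclusion then follows from adding these two supply rates.

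First I would strip the matrix inequality down by a Schur complement on the lower-right $2\times 2$ block $\mathrm{diag}(-P,-I)$, which is manifestly negative definite whenever $P>0$. This yields the equivalent condition
\begin{equation*}
\begin{bmatrix}
-\beta^{2}P+\bar{A}^{T}P\bar{A}+\bar{C}^{T}\bar{C} & \bar{A}^{T}P\bar{B}+\bar{C}^{T}\bar{D}\\
\bar{B}^{T}P\bar{A}+\bar{D}^{T}\bar{C} & -I+\bar{B}^{T}P\bar{B}+\bar{D}^{T}\bar{D}
\end{bmatrix}<0.
\end{equation*}

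Next I would take $V(\bar{X}(k))=\bar{X}(k)^{T}P\bar{X}(k)$, substitute $\bar{X}(k+1)=\bar{A}\bar{X}(k)+\bar{B}\bar{W}(k)$, and rewrite
\begin{equation*}
V(\bar{X}(k+1))-\beta^{2}V(\bar{X}(k))=\begin{bmatrix}\bar{X}(k)\\ \bar{W}(k)\end{bmatrix}^{T}\!\!\begin{bmatrix}\bar{A}^{T}P\bar{A}-\beta^{2}P & \bar{A}^{T}P\bar{B}\\ \bar{B}^{T}P\bar{A} & \bar{B}^{T}P\bar{B}\end{bmatrix}\!\!\begin{bmatrix}\bar{X}(k)\\ \bar{W}(k)\end{bmatrix}.
\end{equation*}
The Schur-reduced MI above then gives directly
\begin{equation*}
V(\bar{X}(k+1))-\beta^{2}V(\bar{X}(k))<\bar{W}(k)^{T}\bar{W}(k)-\bar{Y}(k)^{T}\bar{Y}(k),
\end{equation*}
where I use the identity $\bar{Y}^{T}\bar{Y}=\bar{X}^{T}\bar{C}^{T}\bar{C}\bar{X}+2\bar{X}^{T}\bar{C}^{T}\bar{D}\bar{W}+\bar{W}^{T}\bar{D}^{T}\bar{D}\bar{W}$ to collect the $\bar{C},\bar{D}$ terms into $-\bar{Y}^{T}\bar{Y}$.

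Closing the loop with (\ref{int_con_sys_rev_1_lm}) and the norm bound $\bar{\Delta}(k)^{T}\bar{\Delta}(k)\le I$, I get $\bar{W}(k)^{T}\bar{W}(k)=\bar{Y}(k)^{T}\bar{\Delta}(k)^{T}\bar{\Delta}(k)\bar{Y}(k)\le \bar{Y}(k)^{T}\bar{Y}(k)$, so the right-hand side is non-positive and hence $V(\bar{X}(k+1))<\beta^{2}V(\bar{X}(k))$ strictly whenever $\bar{X}(k)\ne 0$. Iterating gives $V(\bar{X}(k))<\beta^{2k}V(\bar{X}(0))$, and sandwiching by $\lambda_{\min}(P)\|\bar{X}\|^{2}\le V(\bar{X})\le \lambda_{\max}(P)\|\bar{X}\|^{2}$ produces the desired robust exponential bound $\|\bar{X}(k)\|\le \sqrt{\lambda_{\max}(P)/\lambda_{\min}(P)}\,\beta^{k}\|\bar{X}(0)\|$ for every admissible realization of $\bar{\Delta}$. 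I do not expect a genuine obstacle: the only subtlety is making the strict inequality quantitative (one can absorb the slack in (\ref{MI_cond}) into an $\varepsilon$-margin so that $V(\bar{X}(k+1))\le (\beta^{2}-\varepsilon)V(\bar{X}(k))$ and implicitly well-posedness of the algebraic loop $\bar{W}=\bar{\Delta}(\bar{C}\bar{X}+\bar{D}\bar{W})$, which again follows because the $(2,2)$ block of the Schur-reduced MI forces $\bar{D}^{T}\bar{D}<I$ and hence $I-\bar{\Delta}\bar{D}$ is invertible).
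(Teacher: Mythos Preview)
The paper does not actually prove this lemma; it merely cites it from reference \cite{pan2023cloud} and moves on. Your argument is the standard dissipation/S-procedure proof of this kind of bounded-real condition: Schur-complement the lower $2\times 2$ block, read off the Lyapunov dissipation inequality $V(\bar{X}(k+1))-\beta^{2}V(\bar{X}(k))<\bar{W}^{T}\bar{W}-\bar{Y}^{T}\bar{Y}$, and close the loop with $\bar{\Delta}^{T}\bar{\Delta}\le I$. Every step checks out, including the well-posedness remark (the $(2,2)$ block of the reduced inequality indeed forces $\bar{D}^{T}\bar{D}<I$, so $I-\bar{\Delta}\bar{D}$ is invertible and the algebraic loop has a unique solution; in particular $\bar{X}(k)=0$ forces $\bar{W}(k)=0$, which is what you need for the strict decrease to be meaningful). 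This is almost certainly the same proof appearing in the cited source, so there is nothing substantive to compare.
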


The viability of stability analysis in the format of matrix inequality, as expressed in (\ref{MI_cond}), enables the tailored design of stabilizing controller and observer gains. It's important to underscore that if the controller and observer gains are unknown variables, the matrix inequality (\ref{MI_cond}) entails inverse relation constraints and is inherently unsolvable through direct means. Nevertheless, the Cone Complementary Linearization (CCL) methodology, a well-established tool, has proven effective in addressing analogous challenges \cite{gonzalez2013robust,pan2023vehicle}. Its adoption offers a pathway to deducing feasible controller gains. However, for brevity, the specific controller design algorithm is omitted from this discussion.

\begin{remark}
    The theoretical contributions of this article in comparison to related studies warrant specific emphasis. The foundational concept behind the proposed predictor-observer structured controller finds its roots in \cite{gonzalez2019gain}. However, this previous work falls short in addressing scenarios involving unmeasurable input delay and model uncertainty. The controller's structural aspects have been discussed in our recently published study \cite{pan2023cloud}, yet without accounting for the issue of model uncertainty-a gap that persists even in the absence of high-fidelity simulations and real vehicle testing. Furthermore, it's noteworthy to acknowledge that elements of the mathematical framework used to model impact of the delay draw inspiration from \cite{gonzalez2013robust}.
\end{remark}
\section{Simulation test}
For deploying the control strategy to an actual system, preliminary simulation is essential. Therefore, this section establishes a simulation scenario using Matlab Simulink + Carsim, integrating a high-fidelity vehicle dynamic model. The proposed control algorithm's validation occurs within lateral control scenarios, enabling comprehensive evaluation of controller performance.
\subsection{Simulation setting}
In the lateral control simulation, an A-class hatchback vehicle model from Carsim is adopted to simulate the vehicle's lateral dynamic. To make the simulation result closer to the real plant situation, some random white noise is added to the Carsim outputs, and the amplitude of specific noise is determined by certain sensor accuracy. The structure of the lateral control simulation could be shown as Fig. \ref{fig:lat_sim_struc}. The key parameters of the model are shown in Table~\ref{tab:lateral_para}.

\begin{figure}[htb]
    \centering
    \includegraphics[width=0.9\linewidth]{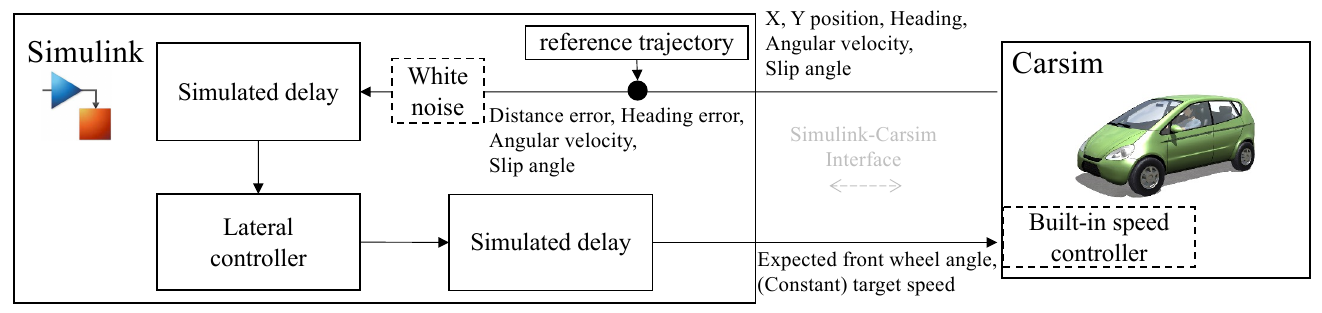}
    \caption{The structure of the lateral control simulation}
    \label{fig:lat_sim_struc}
\end{figure}

\begin{table}[ht]
	\centering
	\caption{Key parameters of lateral control dynamic model in the simulation\label{tab:lateral_para}}
	\begin{tabular} {cccccccc}
	\toprule
$l_f/m$ & $l_r/m$ &$l_s/m$ & $m/kg$  & $I_z/(kg\cdot m^2)$ & $v/(m/s)$   &$c_f/(N/rad)$ & $c_r/(N/rad)$ \\
\midrule
1.1 &	1.25&2.5 &	850.8 &	750 & 5	&	35696&	32299   \\
\bottomrule
\end{tabular}	
\end{table}
The vehicle is assumed to be controlled by a remote controller to tack a reference lane-change trajectory, shown in Fig.~\ref{fig:lat_ref_traj_sim}. The time-varying simulated delay is added at bilateral side of the controller. It needs to be noted that the output and input delay considered in the simulation are discretized by the control cycle, shown as:
\begin{equation}
    T_k^O=d_k^O\cdot T_c,\ T_k^I= d_k^I\cdot T_c, \label{eq:sim_dl_def}
\end{equation}
where $T_k^O$ is the output delay at step $k$, $T_k^I$ is the input delay at step $k$, $T_c$ is the control cycle, which is set to be 0.05$s$ in the simulation. The delay settings (step delay length boundary) considered in the lateral simulation are $d_k^O\in\left[4,7\right],\ d_k^I \in \left[3,5\right]$. It also needs to be clarified that both the input and the output delay are generated randomly with uniform distribution within the boundary.

\begin{figure}[htb]
    \centering
    \includegraphics[width=0.9\linewidth]{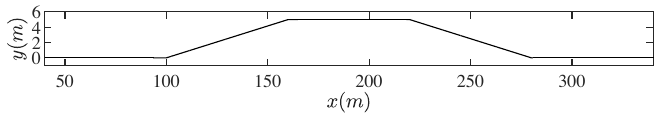}
    \caption{Reference trajectory in the lateral simulation.}
    \label{fig:lat_ref_traj_sim}
\end{figure}

A discrete LQR controller is set as a comparison controller. Based on the same linear model, the controller gain are given in Table.~\ref{tab:lateral_controller_gain_sim}. An MPC controller is also designed with the Simulink MPC toolbox, whose major design parameters are indicated as Table.~\ref{tab:lat_mpc_par_sim}. In comparison, the proposed predictor-observer structured controller is designed, and have been verified to satisfy the stability condition proposed in previous section. The controller and observer gains are given also in Table.~\ref{tab:lateral_controller_gain_sim}. A prediction controller (from \cite{gonzalez2019gain})  with similar structure but demand input delay measurement is also adopted as a comparison controller. To make the comparison more intuitive, the controller is with the same controller and observer gains (the only difference is the prediction structure).

\begin{table}[ht]
	\centering
	\caption{Controller and observer gains for lateral control in the simulation\label{tab:lateral_controller_gain_sim}}
	\setlength\tabcolsep{2pt}
	\begin{tabular}{ccc}
\toprule
Type & Controller gain & Observer gain  \\
\midrule
LQR &$\left[\-0.0309,-0.0210,-0.5149,-0.1810\right]$  &N/A \\
proposed method~/ \cite{gonzalez2019robust}
&$\left[-0.0303,-0.0221,-0.696,-0.1810\right]$&$\left[\begin{smallmatrix}-0.5483&-0.006&0&0\\0.0197&-0.6681&0&0\\0.0011&0.0184&0.25&0\\0.1275&0.0474&0.25&0.25\\\end{smallmatrix}\right]$ \\[2.2ex]
		\midrule
\end{tabular}	
\end{table}
\begin{table}[ht]
	\centering
	\caption{Parameters of the comparison MPC controller in the simulation\label{tab:lat_mpc_par_sim}}
	\begin{tabular}{ccccc}
\toprule
Output weight & Input weight  & Prediction horizon&Control horizon& Input bound  \\
\midrule
 $\left[0.05,0.1,0.1,0.2\right]$  &	1 &15&3&$\pm$ 0.2\\
\bottomrule
\end{tabular}	
\end{table}

\begin{remark}
Importantly, the Carsim's inherent math model presents nonlinearity\cite{liu2022recurrent}. The linearized model parameters tabulated in Table \ref{tab:lateral_para}, such as tire cornering stiffness, signify estimated values that have been meticulously verified for their state prediction accuracy. Consequently, the utilization of the proposed method in the simulation, encompassing the Carsim model, underscores its noteworthy robustness in the face of model uncertainty. This remark holds true for the subsequent real vehicle testing section as well.
\end{remark}

\subsection{Simulation result and analysis}
The controlled vehicle trajectories in the lateral control simulation are indicated as Fig.~\ref{fig:lat_traj_sim}. The controlled trajectory of the regular controller under near zero delay condition (one-step input and output delay is retained, to remain the rationality of the simulation) are also given as a comparison. The resultant trajectory clearly reveal that  conventional LQR and MPC controllers excel in trajectory tracking with minor delays. However, their performance degrades markedly in the presence of significant input and output delays, even leading to unstable tracking. This observation underscores the limitations of applying standard lateral controllers to cloud control scenarios.
In contrast, the proposed method could alleviate the situation, and comparative result of the two prediction-based controllers reveals a similar tracking efficacy, whereas, the proposed method achieves this level of performance without necessitating precise measurement of input delay duration.

To better illustrate how the prediction generated by the controller compensate the bi-directional delay effect, the evolution of the current system states, system states after closed-loop delay, and predicted system states through the simulation tests  presented as Fig.~\ref{fig:syn_st_lat_sim}. It need to be explained that Fig.~\ref{fig:syn_st_lat_sim} is generated in the timeline of the actuator, and the states $2(A^{-h^I_1}+A^{-h^I_2})^{-1}\hat{Z}(k)$ is adopted as the prediction of the system sates when the control signal is finally arrived at the actuator, it is also assumed that a message contain the system output are sent from the sensor and forward by the controller and finally arrive at the actuator(along with the control input) and is adopt as the delayed system states in Fig.~\ref{fig:syn_st_lat_sim}. The results demonstrate rapid convergence of predicted system states to actual states post-impulse disturbance, providing insight into the notable improvement in control performance.
\begin{figure}[htb]
    \centering
    \includegraphics[width=0.7\linewidth]{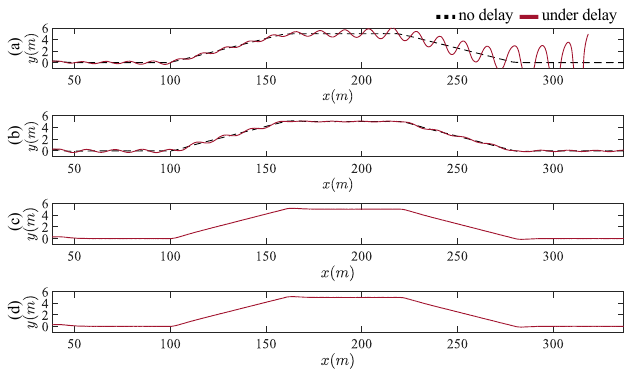}
    \caption{Controlled vehicle trajectory under (a) LQR controller (b) MPC controller   (c) \cite{gonzalez2019gain} controller (d) proposed controller}
    \label{fig:lat_traj_sim} 
\end{figure}

\begin{figure}[htb]
    \centering
    \includegraphics[width=0.9\linewidth]{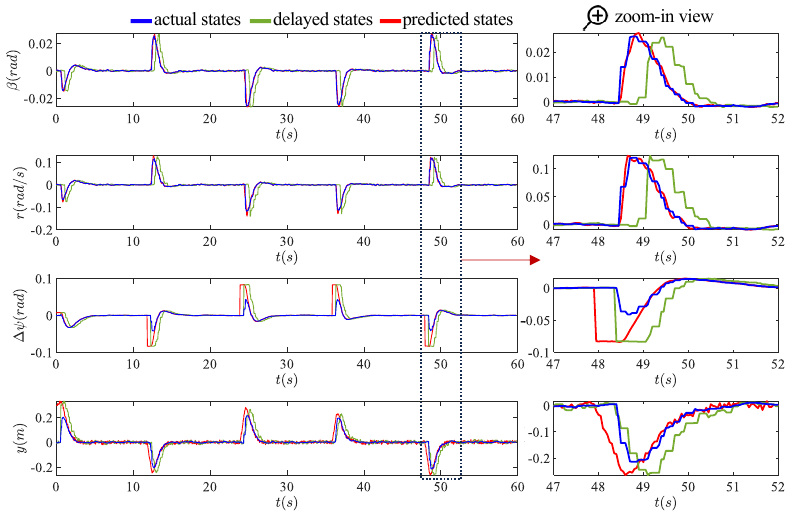}
    \caption{Evolution of the current system states, delayed (close-loop delay) system states and predicted system states}
    \label{fig:syn_st_lat_sim}
\end{figure}

\section{Real vehicle test}
In this section, the proposed control strategy is deployed on a real ICV, and the lateral control test is conducted. The ensuing subsection presents both the test plant configuration and the corresponding test results.

\subsection{Test setting}
The foundation of the test plant is a 2018 ChangAn CS55 SUV, shown as Fig.~\ref{fig:CS55_ext_view}, is equipped with relevant autonomous driving sensors and devices. Notably, for the test purposes, only the IMU (Inertial Measurement Unit) is utilized to ensure accurate position information of the controlled vehicle. The controller logic is achieved through the Matlab on a personal computer, and communication links are built between the Matlab and original sensor and actuator drivers that running on the on-board IPC (Industrial Personal Computer). This interaction is established via TCP protocol, physically facilitated by a direct Ethernet cable connection. The aforementioned test plant structure is indicated as Fig.~\ref{fig:test_plant_sys_struc}. Following the same process as in the lateral simulation, the main lateral control dynamic parameters of the plant are indicated as Table.\ref{tab:lateral_para_CS55}.

\begin{figure}[htb]
    \centering
    \includegraphics[width=0.8\linewidth]{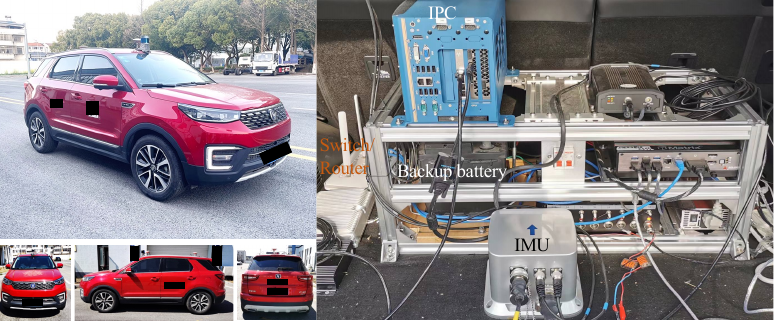}
    \caption{The ChangAn CS55 ICV test plant}
    \label{fig:CS55_ext_view}
\end{figure}

\begin{figure}
    \centering
    \includegraphics[width=0.9\linewidth]{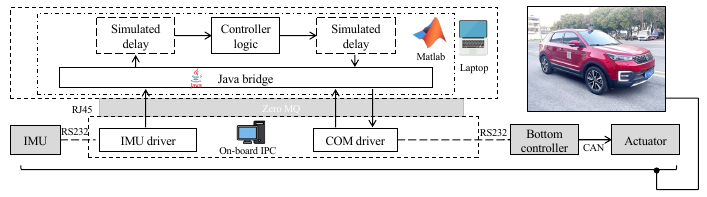}
    \caption{Structure of the test system}
    \label{fig:test_plant_sys_struc}
\end{figure}

\begin{table}[ht]
	\centering
	\caption{Key parameters of lateral control dynamic model in the
 field test\label{tab:lateral_para_CS55}}

	\begin{tabular} {cccccccc}
        \toprule
$l_f/m$ & $l_r/m$ &$l_s/m$ & $m/kg$  & $I_z/(kg\cdot m^2)$ & $v/(m/s)$   &$c_f/(N/rad)$ & $c_r/(N/rad)$ \\
\midrule
1.17 &	1.48&3 &	1570 &	2700 & 3.5	&	121100&	199831   \\
\bottomrule
\end{tabular}	
\end{table}

Same as the scenario setting in the simulation part, the vehicle is also controlled to track a reference lane change trajectory within a twin four-lane road, shown as Fig.~\ref{fig:test_field_map_and_ref_traj}. Regarding the delay configuration, as illustrated in Fig. \ref{fig:test_plant_sys_struc}, an extra simulated delay ($\widetilde{d}^O_k$ for output delay and $\widetilde{d}^I_k$ for input delay) is introduced within Matlab for the sake of convenient delay setting. It should be noted that the original system  contain certain length of inherent delay ($\bar{d}^O_k$ for output delay and $\bar{d}^I_k$ for input delay), due to the communication structure and the steering actuator dead time, which are not neglectable in the proposed controller design. After carefully measure and estimate the system inherent delay, the inherent delay range, additional simulated delay range, the aggregate delay range and the discrete delay step-size $T_c$ (also the control step-size) are indicated as Table.\ref{tab:vec_test_delay_set}.
\begin{figure}[htb]
    \centering
    \includegraphics[width=0.8\linewidth]{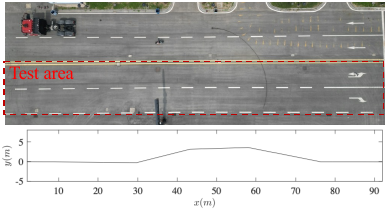}
    \caption{Test site overhead view and the reference trajectory}
    \label{fig:test_field_map_and_ref_traj}
\end{figure}
\begin{table}[ht]
	\centering
	\caption{Delay range settings in field test\label{tab:vec_test_delay_set}}
	\begin{tabular}{ccccccc}
		\toprule
		$\bar{d}^O_k$ & $\bar{d}^I_k$ &$\widetilde{d}^O_k$ & $\widetilde{d}^I_k$ & $d^O_k$ &$d^I_k$&$T_c(s)$ \\
		\midrule
		$[1,1]$ & $[2,6]$ & $[5,6]$ & $[2,2]$  & $[6,7]$ & $[4,8]$&0.05\\
        \bottomrule
	\end{tabular}
\end{table}

The LQR controller is also adopted as the comparison controller, and a set of controller gain that tuned under scenarios without additional delay are given in Table.~\ref{tab:lateral_controller_gain_vec_test}.
It is crucial to emphasize that in the field test, the slipping angle is omitted, causing the lateral control model to simplify to a 3-dimensional model. This simplification arises from the relatively minor slipping angle observed in the testing scenario. Additionally, it is challenging to obtain a direct and precise measurement of the slipping angle using the existing on-board IMU. 
 In contrast, the proposed predictor-observer structured controller is developed based on the delay configuration detailed in Table ~\ref{tab:vec_test_delay_set}. These design choices in line with the stability conditions outlined in this article. The corresponding controller and observer gains are provided in Table ~\ref{tab:lateral_controller_gain_vec_test}.
To mitigate potential unidentified randomness within the test plant, a total of 10 tracking tests are conducted. These tests encompass the delay settings, utilizing both the LQR controller and the proposed controller. Additionally, to serve as a basis of comparison, repeated tests are carried out with the LQR controller under conditions without additional delay (i.e., only inherent delay).

\begin{table}[ht]
	\centering
	\caption{Controller and observer gains for lateral control in field test\label{tab:lateral_controller_gain_vec_test}}
	\begin{tabular}{ccc}
		\toprule
Type & Controller gain & Observer gain  \\
\midrule
LQR &$\left[\begin{smallmatrix}-0.0249&-0.7709&-0.2594\\\end{smallmatrix}\right]$  &N/A \\
proposed method&$\left[\begin{smallmatrix}-0.0249&-0.7709&-0.2594\\\end{smallmatrix}\right]$&$\left[\begin{smallmatrix}-0.6545&0&0\\-0.0141&0.3&0\\0.0492&0.1500&0.3000\\\end{smallmatrix}\right]$ \\
	\bottomrule
\end{tabular}	
\end{table}

\subsection{Test result and analysis}
In absence of additional delay, four trajectory outcomes governed by the LQR controller are selected and depicted in Fig.~\ref{fig:vec_traj_LQR_no_delay}. These trajectory results indicate the LQR controller's ability to accurately and swiftly track the reference trajectory, despite variations in the initial position. Importantly, as these outcomes are achieved within inherent input and output delays, they suggest the potential robustness of the LQR controller against delay, given meticulous tuning.
\begin{figure}[htb]
    \centering
    \includegraphics[width=1\linewidth]{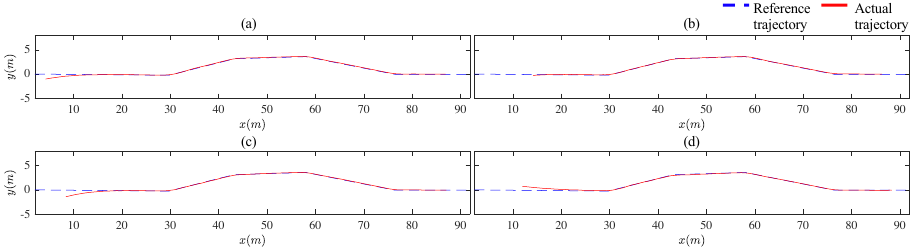}
    \caption{Controlled vehicle trajectory (LQR, no additional delay)}
    \label{fig:vec_traj_LQR_no_delay}
\end{figure}

In the presence of additional delays, four sets of resulting trajectories governed by the LQR controller are chosen and illustrated in Fig.~\ref{fig:vec_traj_LQR_with_delay}. The trajectory outcomes underscore the evident impact of additional delay on lateral control performance. In certain instances, this influence leads to unstable tracking.
\begin{figure}[htb]
    \centering
    \includegraphics[width=1\linewidth]{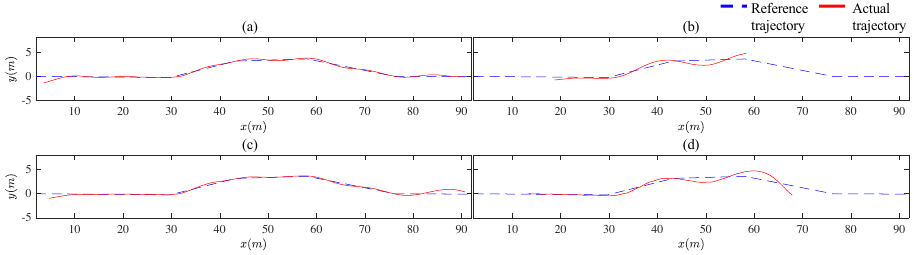}
    \caption{Controlled vehicle trajectory (LQR, with additional delay)}
    \label{fig:vec_traj_LQR_with_delay}
\end{figure}

In contrast, under the same delay configuration, four sets of resulting trajectories governed by the proposed controller are chosen and depicted in Fig.~\ref{fig:vec_traj_syn_with_delay}. The discernible enhancement in control performance is evident when contrasted with the standard LQR controller. Remarkably, the proposed controller consistently achieves accurate tracking of the predetermined trajectory in all additional delay instances. To better convey that the system states are adopted with delay and are perfectly stabilized by the controller, Fig.~\ref{fig:vec_test_syn_st} depicts both the actual system states and the delayed system states that are fed to the controller throughout the test shown in Fig.~\ref{fig:vec_traj_syn_with_delay}.

\begin{figure}[htb]
    \centering
    \includegraphics[width=1\linewidth]{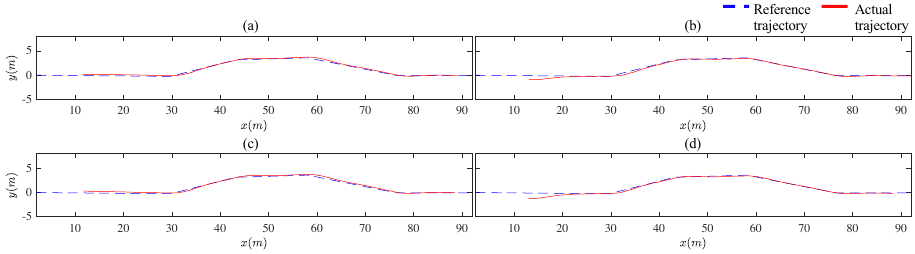}
    \caption{Controlled vehicle trajectory (proposed method, with additional delay)}
    \label{fig:vec_traj_syn_with_delay}
\end{figure}

\begin{figure}[htb]
    \centering
    \includegraphics[width=1\linewidth]{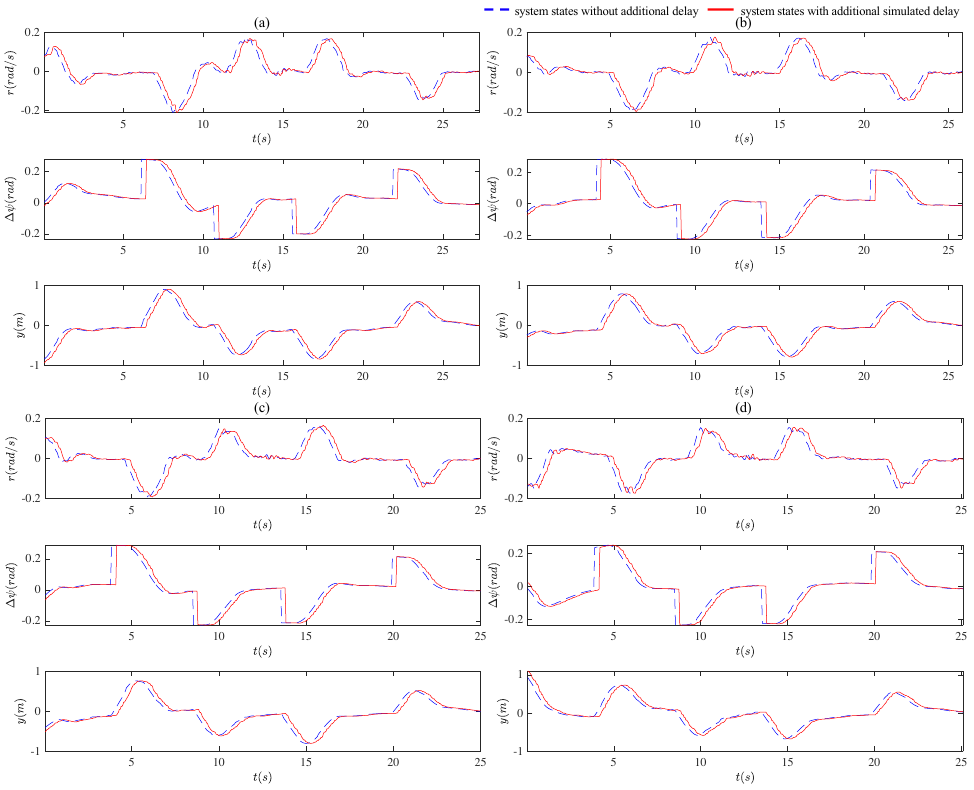}
    \caption{Evolution of the real system states and the delayed system states adopted by the controller}
    \label{fig:vec_test_syn_st}
\end{figure}

Lastly, for comprehensive control performance depiction across all tests and to unveil the repeatability of the proposed approach, box plots are presented as Fig.~\ref{fig:vec_test_boxplot}. These plots indicate the average absolute distance error and heading error relative to the reference trajectory ($y_L$ and $\psi _L$) throughout the 10 repeated tests. Notably, the results unmistakably underscore the impact of delay on the control performance of the LQR controller, while also affirming the consistent effectiveness of the proposed method.
\begin{figure}[htb]
    \centering
    \includegraphics[width=0.8\linewidth]{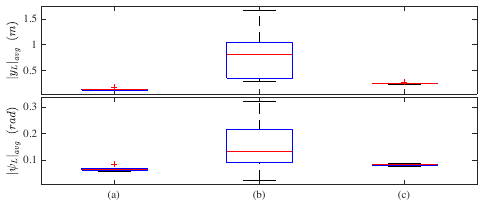}
    \caption{Average absolute distance error and heading error result box plot for (a) LQR without additional delay, (b) LQR with additional delay (c) proposed method with additional delay.}
    \label{fig:vec_test_boxplot}
\end{figure}

\begin{remark}
It need to emphasis that unlike the result in the simulation, no visual prediction results are presented in in the real vehicle test. This is attributed to the engineering challenge of designing an ultimate ``referee"  at the timeline of the actuator to generate replicate results akin to those in Figure.~\ref{fig:syn_st_lat_sim}.  Consequently, only macroscopical result are presented.
\end{remark}

\begin{remark}
  It is essential to emphasize that tests of the proposed controller under conditions of no additional delay were intentionally omitted. This choice is motivated by the fact that the proposed controller is explicitly designed for scenarios involving more significant delays. Its performance under minor delay conditions would closely resemble that of a standard constant-parameter controller, rendering a comparison insignificant. Furthermore, in alignment with the structure of the proposed method, it's important to note that under zero delay conditions, the proposed method would effectively revert to a conventional linear state feedback controller.
\end{remark}

\section{Conclusion}
In summary, this paper presents a pioneering predictor-observer structured delay compensation approach tailored for the ICV cloud control system, addressing the challenges of bi-directional time-varying delay. The introduced inter-connected system modeling technique enables the analysis of input-output stability within the predictor-observer framework, even amidst bounded delays and model uncertainties.
Our proposed control strategy is validated through simulations of connected vehicle lateral control scenarios, employing a high-fidelity vehicle dynamic model. Simulation results demonstrate the method's capability to utilize delayed outputs for precise future state prediction. Additionally, real-world testing further confirms the method's efficacy in mitigating dominant bi-directional delays, surpassing conventional controllers.
Given the validation via high-fidelity simulations and field tests, our approach holds genuine potential as a delay compensation control strategy within the future implementation of the ICV cloud control system.

\section*{Acknowledgments}
This work was supported by National Natural Science Foundation of China (52072212) and Joint Laboratory for Internet of Vehicle, Ministry of Education-China Mobile Communications Corporation.
\newpage
{\appendices
\section{Proof of Theorem \ref{thm:int_cont_sys_rep}}\label{apx:1}

If the input delay and output delay are all measurable as the assumptions in \cite{gonzalez2019gain}, the predictor-observer could be constructed as:
\begin{align}
    {\hat{Z}}(k+1)&=A{\hat{Z}}(k)+Fu(k)+L\left(A^{d_k^O}{\bar{y}}(k)-C{\hat{Z}}(k)\right), \\
    {\bar{y}}(k) &=y(k)+CA^{{-d}_k^O}\left(\Phi_k\left(h_1^I\right)+\Phi_k\left(h_2^I\right)+\Omega_k\left(d_k^O\right)\right) \label{y_pre_def_thm}, 
\end{align}
\begin{equation}
    \Omega_k\left(d_k^O\right)=\sum_{i=0}^{d_k^O-1}A^{d_k^O-i-1}Bu(k-d_k^O+i-d_{k-d_k^O+i}^I) \label{omega_def_in_thm_prf}.
\end{equation}
(\ref{y_pre_def_thm}) is the core prediction function for the method in \cite{gonzalez2019gain}, which makes an ingenious connection between output delay compensation and input delay compensation. Nevertheless, since the controller is assumed to have no measurement on the accurate input delay (except the boundary information), then $d_{k-d_k^O+i}^I$ is not available to the predictor, and the prediction could not be generated directly as (\ref{omega_def_in_thm_prf}).  Therefore, constructing an estimation on $\Omega_k\left(d_k^O\right)$ as:
 \begin{align}
     {\bar{\mathrm{\Omega}}}_k\left(d_k^O\right)=\frac{1}{2}\sum_{i=0}^{d_k^O-1}A^{d_k^O-i-1}Bu(k-d_k^O+i-h_1^I) +\frac{1}{2}\sum_{i=0}^{d_k^O-1}A^{d_k^O-i-1}Bu(k-d_k^O+i-h_2^I).   
     \label{eq:omg_est_apdx}
\end{align} 
Here, from result in \cite{pan2023cloud}, to model the estimation error brought by the $\bar{\Omega }_k\left(d_k^O\right)$, introduce the input term $w_d(k)$, which is defined as:
\begin{equation}
    u(k-d_k^I)=\frac{1}{2}\sum_{j=1}^{2}u(k-h_j^I)+\frac{\tau}{2}w_{d}(k), \label{u_k-dik_1}
\end{equation}
where $\tau=h_2^I-h_1^I$ and:
\begin{align}
 w_{d}(k)&=\Delta_{d}(k)v(k) \label{w_dk_final},\\
\left\|\Delta_{d}(k)\right\|_{\infty} &\le 1.    \nonumber
\end{align}
Then, the estimation error could be indicated as:
\begin{equation}
    \Omega_k\left(d_k^O\right)-\bar{\Omega }_k\left(d_k^O\right)=\sum_{i=0}^{d_k^O-1}{A^{d_k^O-i-1}B}w_{d}(k-d_k^O+i) \label{omega_diff}
\end{equation}
Then, the estimated output ${\bar{y}}(k)$ and estimated (transformed) system states ${\bar{Z}}(k)$ could be constructed as:
\begin{align}
{\bar{y}}(k)&=CA^{{-d}_k^O}\left(A^{d_k^O}x(k-d_k^O)+\Phi_k\left(h_1^I\right)+\Phi_k\left(h_2^I\right)+{\bar{\mathrm{\Omega}}}_k\left(d_k^O\right)\right), \\
    {\bar{Z}}(k)&=A^{d_k^O}x(k-d_k^O)+{\bar{\mathrm{\Omega}}}_k\left(d_k^O\right)+\Phi_k\left(h_1^I\right)+\Phi_k\left(h_2^I\right). 
\end{align}
Hence, the dynamics of ${\hat{Z}}(k)$ could be indicated as:
\begin{equation}
    {\hat{Z}}(k+1)=A{\hat{Z}}(k)+Fu(k)+LA^{d_k^O}CA^{{-d}_k^O}\left({\bar{Z}}(k)-{\hat{Z}}(k)\right). \label{z_obs_mid_1}
\end{equation}
Define the observation error as:
\begin{equation}
    e(k)=Z(k)-{\hat{Z}}(k) \label{e_k_def}.
\end{equation}
Since the prediction could only be conducted with an estimated representation of $Z(k)$, define the estimation error as:
\begin{equation}
    {ep}(k)=Z(k)-{\bar{Z}}(k) \label{ep_k_def} 
\end{equation}
Here, the construction of the prediction error is discussed. The prediction error considered in this manuscript sourced from two major parts, the first part comes from the estimation ${\bar{\mathrm{\Omega}}}_k$, the second part comes from the model uncertainty, along with the mixing effection part. Besides, based on the definition of $Z(k)$, it reasonable to neglect the prediction error subject to input delay prediction and focus on only the state prediction error, which could be indicated as:
\begin{equation}
    ep(k)=x(k)-{\bar{x}}(k). \label{ep_k_def_2}
\end{equation}
To present the dynamic of $ep(k)$, based on the prediction equation subject to the output delay compensation and the forward system dynamic, it could be conducted that:
\begin{align}
    x(k)=&(A+\Delta\widetilde{A}(k))^{d_k^o}x(k-d_k^O)+\sum_{i=0}^{d_k^O-1}{(A+\Delta\widetilde{A}(k))^{d_k^O-i-1}(B+\Delta\widetilde{B}(k))u(k-d_k^O+i-d_{k-d_k^O+i}^I)} \label{real_sys_evo_org} \\
    {\bar{x}}(k)=&A^{d_k^O}x(k-d_k^O)+\frac{1}{2}\sum_{i=0}^{d_k^O-1}A^{d_k^O-i-1}Bu(k-d_k^O+i-h_1^I)+\frac{1}{2}\sum_{i=0}^{d_k^O-1}A^{d_k^O-i-1}Bu(k-d_k^O+i-h_2^I) \label{state_est_fcn}
\end{align}
where $\Delta\widetilde{A}(k)$ and $\Delta\widetilde{B}(k)$ are averaged equivalent representations of the uncertainty $\Delta A(k)$ and $\Delta B(k)$ through $\left[k-d_k^o,k\right]$, and they are assumed to have the same form of boundary, which are indicated as:
\begin{align}
\left(\Delta\widetilde{A}(k),\Delta\widetilde{B}(k)\right)&=\widetilde{\gamma}\widetilde{E}\Delta _p(k)\left(H_A,H_B\right), \label{avg_uc_rep} \\
\left\| \Delta _p(k) \right\|_{\infty} &\le 1 , \nonumber
\end{align}
where $\widetilde{E}$ is an averaged representation of $E$. To simplify the calculation, $\Delta\widetilde{A}(k)$ and $\Delta\widetilde{B}(k)$ are assumed to have the same order of infinitesimal and are far smaller than $A$ and $B$. Therefore, all terms with order higher than or equal to ${\Delta\widetilde{A}(k)}^2$ are neglected. Besides, in order to decouple the error brought by the estimation of input delay and the model uncertainty, it is also assumed that the error brought by the estimation of the input delay is with the same order of infinitesimal as the model uncertainty error. Then, (\ref{real_sys_evo_org}) will be further simplified as:
\begin{align}
    x\left(k\right)=&A^{d_k^o}x(k-d_k^o)+\sum_{i=0}^{d_k^o-1}A^{d_k^o-i-1}Bu(k-d_k^o+i)+C_{d_k^o}^1A^{d_k^o-1}\Delta\widetilde{A}(k)x(k-d_k^o) \nonumber \\
    & +\sum_{i=0}^{d_k^o-1}( A^{d_k^o-i-1}\Delta\widetilde{B}(k)+C_{d_k^o-i-1}^1A^{d_k^o-i-2}\Delta\widetilde{A}(k)B ) u(k-d_k^O+i-d_{k-d_k^O+i}^I),\label{real_sys_evo_sim} 
\end{align}
where $C_n^1$ is the binomial coefficient function. 
Hence, based on (\ref{omega_diff}),(\ref{state_est_fcn}), and (\ref{real_sys_evo_sim}), (\ref{ep_k_def_2})  could be indicates as:
\begin{align}
	 ep(k)=&C_{d_k^O}^1A^{d_k^O-1}\Delta \widetilde{A}(k) x(k-d_k^O)+\sum_{i=0}^{d_k^O-1}{A^{d_k^O-i-1}B}w_{d}(k-d_k^O+i) \nonumber \\ 
  &+\frac{1}{2}\sum_{i=0}^{d_k^O-1}{\left(A^{d_k^O-i-1}\Delta \widetilde{B}(k)+C_{d_k^O}^1A^{d_k^O-i-2}\Delta \widetilde{A}(k) B\right)u(k-d_k^O+i-h_1^I)} \nonumber \\
&+\frac{1}{2}\sum_{i=0}^{d_k^O-1}\left(A^{d_k^O-i-1}\Delta\widetilde{B}(k)+C_{d_k^O}^1A^{d_k^O-i-2}\Delta \widetilde{A}(k) B\right)u(k-d_k^O+i-h_2^I), 
\end{align}
corresponding to the estimation error, constructing an extra input term $w_p(k)$ and output term $y_p(k)$, which are indicated as:
\begin{align}
    {ep}(k)=&\widetilde{\gamma} \widetilde{E} w_{p}(k) ,\label{ep_k=Ew_p}\\
    w_{p}(k)=&\Delta_p(k)y_p(k), \left\|\Delta _p(k)\right\|_{\infty}\leq 1.
\end{align}
Thus, the output $y_p(k)$ could be indicated as:
\begin{align}
    &y_{p}(k)=C_{d_k^O}^1A^{d_k^O-1}H_Ax(k-d_k^O)+\sum_{i=0}^{d_k^O-1}{A^{d_k^O-i-1}B}w_{d}(k-d_k^O+i) \nonumber \\
    &+\frac{1}{2}\sum_{i=0}^{d_k^O-1}{\left(A^{d_k^O-i-1}H_B+C_{d_k^O}^1A^{d_k^O-i-2}H_A B\right)u(k-d_k^O+i-h_1^I)} \nonumber \\
    &+\frac{1}{2}\sum_{i=0}^{d_k^O-1}\left(A^{d_k^O-i-1}H_B+C_{d_k^O}^1A^{d_k^O-i-2}H_A B\right)u(k-d_k^O+i-h_2^I). \label{ypk_mid}
\end{align}
Here, it need to note that an assumption have been made to conduct (\ref{ypk_mid}), which is the $\widetilde{\Delta}(k)$ and the  $\widetilde{E}$ only indicate that amplitude of the model uncertainty, whereas $H_A$ and $H_B$ indicate the structure of the model uncertainty.
Now, reconsidering the regular system dynamic, and pulling out model uncertainty as:
\begin{align}
    x(k+1)=&\left(A+\Delta A\right)x(k)+\left(B+\Delta B\right)u(k-d_k^I) \nonumber \\
    =&Ax(k)+Bu(k-d_k^I)+\gamma Ew_{\Delta}(k),
\end{align}
where:
\begin{align}
    w_{\Delta}(k)=&\Delta (k)y_{\Delta}(k), \\
    y_{\Delta}(k)=&H_Ax(k)+H_B u(k-d_k^I). \label{y_deltak_1}
\end{align}
Since the transformed system states $Z(k)$ is defined as:
\begin{equation}
    Z(k)=x(k)+\phi_k\left(h_1^I\right)+\phi_k(h_2^I), \label{z_k_def}
\end{equation}
the one-step evolution of $Z$ could be indicated as:
\begin{equation}
Z(k+1)=x(k+1)+\phi_{k+1}\left(h_1^I\right)+\phi_{k+1}\left(h_2^I\right). \label{z_k+1}
\end{equation}
With (\ref{z_k_def}) and (\ref{z_k+1}), it could be conducted that:
\begin{align}
    Z(k+1)=&A\left(Z(k)-\phi_k\left(h_1^I\right)-\phi_k\left(h_2^I\right)\right)+Bu(k-d_k^I)+\gamma Ew_{\Delta}(k)+\phi_{k+1}\left(h_1^I\right)+\phi_{k+1}\left(h_2^I\right)  \nonumber \\
    =&AZ(k)-A\sum_{j=1}^{2}{\phi_k\left(h_j^I\right)}+Bu(k-d_k^I)+\sum_{j=1}^{2}{\phi_{k+1}\left(h_j^I\right)}+\gamma Ew_{\Delta}(k).
\end{align}
Meanwhile, with (\ref{y_deltak_1}) and (\ref{z_k_def}), output term $y_{\Delta}(k)$ could be indicated as:
\begin{align}
    y_{\Delta}(k)=&H_A\left(Z(k)-\phi_k\left(h_1^I\right)-\phi_k\left(h_2^I\right)\right)+H_Bu(k-d_k^I) \nonumber \\
    =&H_AZ(k)-H_A\sum_{j=1}^{2}{\phi_k\left(h_j^I\right)}+H_Bu(k-d_k^I). \label{y_delta_k_mid1}
\end{align}
Next, to pull out the delayed input term $u(k-d_k^I)$, based on (\ref{u_k-dik_1}) and (\ref{w_dk_final}), the dynamic of $Z$ and $y_{\Delta}$ could be indicated as:
\begin{align}
    Z(k+1)=&AZ(k)-A\sum_{j=1}^{2}{\phi_k\left(h_j^I\right)}+B\left(\frac{1}{2}\sum_{j=1}^{2}u(k-h_j^I)+\frac{\tau}{2}w_{d}(k)\right)+\sum_{j=1}^{2}{\phi_{k+1}\left(h_j^I\right)}+\gamma Ew_{\Delta}(k) \nonumber\\
    =& AZ(k)-A\sum_{j=1}^{2}{\phi_k\left(h_j^I\right)}+\frac{1}{2}B\sum_{j=1}^{2}u(k-h_j^I)+\frac{\tau}{2}Bw_{d}(k)+\sum_{j=1}^{2}{\phi_{k+1}\left(h_j^I\right)}+\gamma Ew_{\Delta}(k)  \\
    y_{\Delta}(k)=&H_AZ(k)-H_A\sum_{j=1}^{2}{\phi_k\left(h_j^I\right)}+H_B\left(\frac{1}{2}\sum_{j=1}^{2}u(k-h_j^I)+\frac{\tau}{2}w_{d}(k)\right).
\end{align}
Based on the definition of the $\phi_k\left(h_j^I\right)$, it could be conducted that:
\begin{equation}
    \phi_{k+1}\left(h_j^I\right)-{A\phi}_k\left(h_j^I\right)=A^{-h_j^I}\frac{B}{2}u(k)-\frac{B}{2}u(k-h_j^I),\ j=1,2
\end{equation}
Then, the dynamic of $Z$ could be simplify as:
\begin{align}
    Z(k+1)=&AZ(k)+\sum_{j=1}^{2}{A^{{-h}_j^I}\frac{B}{2}u(k)}+\frac{\tau}{2}Bw_{d}(k)+\gamma Ew_{\Delta}(k)  \nonumber\\
    =&AZ(k)+{Fu}(k)+\frac{\tau}{2}Bw_{d}(k)+\gamma Ew_{\Delta}(k), \label{z_dyn_mid_2}
\end{align}
where $F=\frac{A^{-h_1^I}+A^{{-h}_2^I}}{2}$. Since the controller is designed as:
\begin{equation}
    u(k)=K{\hat{Z}}(k)=K(Z(k)-e(k)), \label{u_k_final}
\end{equation}
thus, (\ref{z_dyn_mid_2}) could be indicated as:
\begin{align}
    Z(k+1)=&AZ(k)+FK(Z(k)-e(k))+\frac{\tau}{2}Bw_{d}(k)+\gamma Ew_{\Delta}(k) \nonumber \\
    =&\left(A+FK\right)Z(k)-\ FKe(k)+\frac{\tau}{2}Bw_{d}(k)+\gamma Ew_{\Delta}(k). \label{Z_k_dyn_final}
\end{align}
Since the predictor-observer generates the state estimation as:
\begin{equation}
    {\hat{Z}}(k+1)=A{\hat{Z}}(k)+Fu(k)+LA^{d_k^O}CA^{{-d}_k^O}\left({\bar{Z}}(k)-{\hat{Z}}(k)\right), \label{z_dyn_mid_3}
\end{equation}
and also considering the definition of the observation error (\ref{e_k_def}) and the estimation error (\ref{ep_k_def}), it could be conducted that:
\begin{equation}
    {\bar{Z}}(k)-{\hat{Z}}(k)=Z(k)-{\hat{Z}}(k)-\left(Z(k)-{\bar{Z}}(k)\right)=e(k)-{ep}(k),
\end{equation}
thus, (\ref{z_dyn_mid_3}) could be further indicated as:
\begin{align}
    {\hat{Z}}(k+1)=A{\hat{Z}}(k)+Fu(k)+LA^{d_k^O}CA^{{-d}_k^O}(e(k)-{ep}(k)),
\end{align}
thus, the dynamic of the observation error could be indicated accordingly as:
\begin{align}
    &e(k+1) \nonumber \\
    =&Z(k+1)-{\hat{Z}}(k+1) \nonumber \\
    =&\left(A+FK\right)Z(k)-\ FKe(k)+\frac{\tau}{2}Bw_{d}(k)+\gamma Ew_{\Delta}(k)-A{\hat{Z}}(k)-Fu(k)-LA^{d_k^O}CA^{{-d}_k^O}\left({\bar{Z}}(k)-{\hat{Z}}(k)\right) \nonumber \\
    =&A\left(Z(k)-{\hat{Z}}(k)\right)+\frac{\tau}{2}Bw_{d}(k)+\gamma Ew_{\Delta}(k)-LA^{d_k^O}CA^{{-d}_k^O}\left({\bar{Z}}(k)-{\hat{Z}}(k)\right) \nonumber \\
    =&Ae(k)+\frac{\tau}{2}Bw_{d}(k)+\gamma Ew_{\Delta}(k)-LA^{d_k^O}CA^{{-d}_k^O}\left(e(k)-{ep}(k)\right) \nonumber \\
    =&\left(A-LA^{d_k^O}CA^{{-d}_k^O}\right)e(k)+\frac{\tau}{2}Bw_{d}(k)+\gamma Ew_{\Delta}(k)+LA^{d_k^O}CA^{{-d}_k^O}\widetilde{\gamma}\widetilde{E}w_{p}(k), \label{e_k_dyn_final}
\end{align}
where the last step is conducted based on (\ref{ep_k=Ew_p}). Then, the problem comes as how to eliminate the delayed terms in $y_p(k)$. In the light of the same procedure to handle delayed input term, define the step difference of the system sates as:
\begin{equation}
    q(k)=x(k)-x(k-1). \label{q_k_def}
\end{equation}
Then, (\ref{ypk_mid}) could be further conducted as:
\begin{align}
    y_{p}(k)=&C_{d_k^O}^1A^{d_k^O-1}H_Ax(k-d_k^O)+\sum_{i=0}^{d_k^O-1}{A^{d_k^O-i-1}B}w_{d}(k-d_k^O+i) \nonumber \\
    &+\frac{1}{2}\sum_{i=0}^{d_k^O-1}{\left(A^{d_k^O-i-1}H_B+C_{d_k^O}^1A^{d_k^O-i-2}H_A B\right)u(k-d_k^O+i-h_1^I)} \nonumber \\
    &+\frac{1}{2}\sum_{i=0}^{d_k^O-1}\left(A^{d_k^O-i-1}H_B+C_{d_k^O}^1A^{d_k^O-i-2}H_A B\right)u(k-d_k^O+i-h_2^I) \nonumber\\
    =&C_{d_k^O}^1A^{d_k^O-1}H_A\left(x(k-1)-\sum_{f=1}^{d_k^O-1}q(k-f)\right)+\sum_{i=0}^{d_k^O-1}{A^{d_k^O-i-1}B}w_{d}(k-d_k^O+i) \nonumber \\
    &+\frac{1}{2}\sum_{i=0}^{d_k^O-1}\left(A^{d_k^O-i-1}H_B+C_{d_k^O}^1A^{d_k^O-i-2}H_A B\right)\left(u(k-1)-\sum_{f=1}^{d_k^O+h_1^I-i-1}v(k-f)\right) \nonumber \\
    &+\frac{1}{2}\sum_{i=0}^{d_k^O-1}\left(A^{d_k^O-i-1}H_B+C_{d_k^O}^1A^{d_k^O-i-2}H_A B\right)\left(u(k-1)-\sum_{f=1}^{d_k^O+h_2^I-i-1}v(k-f)\right). \label{y_pk_mid_4}
\end{align}
For representation convenience, define some operators as:
\begin{align}
    \beta_1=&C_{d_k^O}^1A^{d_k^O-1}, \\
    \beta_2=&\sum_{i=0}^{d_k^O-1}A^{d_k^O-i-1}, \\
    \beta_3=&\sum_{i=0}^{d_k^O-1}\left(C_{d_k^O}^1A^{d_k^O-i-2}\right),
\end{align}
then, (\ref{y_pk_mid_4}) could be further indicated as:
\begin{align}
    y_{p}(k)=&\beta_1H_Ax(k-1)-\beta_1H_A\sum_{f=1}^{d_k^O-1}q(k-f)+\sum_{i=0}^{d_k^O-1}{A^{d_k^O-i-1}B}w_{d}(k-d_k^O+i)+\left(\beta_2H_B+\beta_3H_A B\right)u(k-1) \nonumber \\
    &-\frac{1}{2}\sum_{i=0}^{d_k^O-1}\sum_{f=1}^{d_k^O+h_1^I-i-1}{(A^{d_k^O-i-1}H_B+C_{d_k^O}^1A^{d_k^O-i-2}H_A B)v(k-f)}\nonumber \\
    &-\frac{1}{2}\sum_{i=0}^{d_k^O-1}\sum_{f=1}^{d_k^O+h_2^I-i-1}{(A^{d_k^O-i-1}H_B+C_{d_k^O}^1A^{d_k^O-i-2}H_A B)v(k-f)}.\label{y_pk_mid_5}
\end{align}
Next, the delayed term in $y_{\Delta}(k)$ is handled with the same logic, then (\ref{y_delta_k_mid1}) could be furtherly indicated as:
\begin{align}
    &y_{\Delta}(k) \nonumber \\
    =&H_AZ(k)-H_A\sum_{j=1}^{2}{\phi_k\left(h_j^I\right)}+H_B(\frac{1}{2}\sum_{j=1}^{2}u(k-h_j^I)+\frac{\tau}{2}w_{d}(k)) \nonumber\\
    =&H_AZ(k)-H_A\frac{1}{2}\sum_{j=1}^{2}\sum_{i=0}^{h_j^I}{A^{-i-1}B(u(k-1)-\sum_{f=1}^{h_j^I-i-1}v(k-f))}+H_B(u(k-1)-\frac{1}{2}\sum_{j=1}^{2}\sum_{f=1}^{h_j^I-1}v(k-f)) \nonumber \\
    &+H_B\frac{\tau}{2}w_{d}(k) \nonumber \\
    =&H_AZ(k)-H_A(\Upsilon u(k-1)-\frac{1}{2}\sum_{j=1}^{2}\sum_{i=0}^{h_j^I}\sum_{f=1}^{h_j^I-i-1}{A^{-i-1}Bv(k-f)})+H_B(u(k-1)-\frac{1}{2}\sum_{j=1}^{2}\sum_{f=1}^{h_j^I-1}v(k-f)) \nonumber \\
    &+H_B\frac{\tau}{2}w_{d}(k), \label{y_delta_k_mid_2}
\end{align}
where:
\begin{equation}
    \Upsilon=\frac{1}{2}\sum_{j=1}^{2}\sum_{i=0}^{h_j^I}{A^{-i-1}B} .\label{upsilon_def}
\end{equation}
Inspired by operation in \cite{gonzalez2013robust,gonzalez2019robust}, taking the $\mathcal{Z}$-transform and inverse $\mathcal{Z}$-transform process to the delayed term (subject to input delay) in $y_{\Delta}(k)$ and introducing new input terms $w_{1}(k), w_{2}(k)$, (\ref{y_delta_k_mid_2}) could be furtherly indicated as:
\begin{equation}
    y_{\Delta}(k)=H_AZ(k)+\left(H_B-H_A\Upsilon\right)u(k-1)+\frac{\tau H_B}{2}w_{d}(k)-\mu_1H_Bw_{1}(k)+\mu_2H_Aw_{2}(k), \label{y_delta_k_final}
\end{equation}
where:
\begin{align}
    w_{1}(k)=&\mathcal{Z}^{-1}(\frac{1}{\mu_1}\mathcal{Z}(\frac{1}{2}\sum_{j=1}^{2}\sum_{f=1}^{h_j^I-1}v(k-f)))=\Delta _1(k)v(k) \\
    \mu _1=&\left\|\frac{1}{2}\sum_{j=1}^{2}\sum_{f=1}^{h_j^I-1}z^{-f}\right\|_{\infty}  \\
    w_{2}(k)=&\mathcal{Z}^{-1}(\frac{1}{\mu_2}\mathcal{Z}(\frac{1}{2}\sum_{j=1}^{2}\sum_{i=0}^{h_j^I}\sum_{f=1}^{h_j^I-i-1}{A^{-i-1}Bv(k-f)}))=\Delta _2(k)v(k) \\
    \mu _2=&\left\|\frac{1}{2}\sum_{j=1}^{2}\sum_{i=0}^{h_j^I}\sum_{f=1}^{h_j^I-i-1}{A^{-i-1}Bz^{-f}}\right\| _{\infty},
\end{align}
where $\Delta _i(i=1,2):v\rightarrow w_i$ are the corresponding operators with the unit norm ($\left\|\Delta _i\right\|_{\infty}=I$).
Following the same procedures and taking the same $\mathcal{Z}$-transform and inverse $\mathcal{Z}$-transform process to the delayed term in $y_p(k)$ and introducing new input terms $w_{3}(k), w_{4}(k),w_{5}(k),w_{6}(k),w_{7}(k)$, (\ref{y_pk_mid_5}) could be furtherly indicated as: 
\begin{align}
    y_{p}(k)=&\beta_1H_Ax(k-1)-\beta_1H_A\mu_7w_{7}(k)+\sum_{i=0}^{d_k^O-1}{A^{d_k^O-i-1}B}w_{d}(k-d_k^O+i)+\left(\beta_2H_B+\beta_3H_A B\right)u(k-1) \nonumber\\
    &-\frac{1}{2}\mu_3w_{3}(k)-\frac{1}{2}C_{d_k^O}^1\mu_4w_{4}(k)-\frac{1}{2}\mu_5w_{5}(k)-\frac{1}{2}C_{d_k^O}^1\mu_6w_{6}(k), \label{y_pk_mid_6}
\end{align}
where:
\begin{align}
    w_{3}(k)=&\mathcal{Z}^{-1}(\frac{1}{\mu_3}\mathcal{Z}(\sum_{i=0}^{d_k^O-1}\sum_{f=1}^{d_k^O+h_1^I-i-1}{A^{d_k^O-i-1}H_Bv(k-f)}))=\Delta _3 (k) v(k) \\
    \mu_3=&\left\|\sum_{i=0}^{d_k^O-1}\sum_{f=1}^{d_k^O+h_1^I-i-1}{A^{d_k^O-i-1}H_Bz^{-f}}\right\|_{\infty},\\
    w_{4}(k)=&\mathcal{Z}^{-1}(\frac{1}{\mu_4}\mathcal{Z}(\sum_{i=0}^{d_k^O-1}\sum_{f=1}^{d_k^O+h_1^I-i-1}{A^{d_k^O-i-2}H_A B v(k-f)})) =\Delta _4 (k) v(k), \\
    \mu_4=&\left\|\sum_{i=0}^{d_k^O-1}\sum_{f=1}^{d_k^O+h_1^I-i-1}{A^{d_k^O-i-2}H_A B z^{-f}}\right\|_{\infty}, \\
    w_{5}(k)=&\mathcal{Z}^{-1}(\frac{1}{\mu_5}\mathcal{Z}(\sum_{i=0}^{d_k^O-1}\sum_{f=1}^{d_k^O+h_2^I-i-1}{A^{d_k^O-i-1}H_Bv(k-f)}))=\Delta _5(k)v(k),\\
    \mu_5=&\left\|\sum_{i=0}^{d_k^O-1}\sum_{f=1}^{d_k^O+h_2^I-i-1}{A^{d_k^O-i-1}H_Bz^{-f}}\right\|_{\infty}, \\
    w_{6}(k)=&\mathcal{Z}^{-1}(\frac{1}{\mu_6}\mathcal{Z}(\sum_{i=0}^{d_k^O-1}\sum_{f=1}^{d_k^O+h_2^I-i-1}{A^{d_k^O-i-2}H_A B v(k-f)}))=\Delta _6 (k)v(k) ,\\
    \mu_6=& \left\|\sum_{i=0}^{d_k^O-1}\sum_{f=1}^{d_k^O+h_2^I-i-1}{A^{d_k^O-i-2}H_A B z^{-f}}\right\|_{\infty}, \\
    w_{7}(k)=&\mathcal{Z}^{-1}\left(\frac{1}{\mu_7}\mathcal{Z}\left(\sum_{f=0}^{d_k^O-1}{q(k-f)}\right)\right)=\Delta _7(k)q(k), \\
    \mu _7=&\left\|\sum_{f=0}^{d_k^O-1}z^{-f}\right\|_{\infty}
\end{align}
where $\Delta _i(i=3,\cdots,6):v\rightarrow w_i$ are also the corresponding operators with the unit norm ($\left\|\Delta _i\right\|_{\infty}=I$), and $\Delta _7:q\rightarrow w_7$ is also the corresponding operator with the unit norm ($\left\|\Delta _7\right\|_{\infty}=I$). After the operation, the only delayed term in (\ref{y_pk_mid_6}) is $w_{d}(k-d_k^O+i)$. Based on the previous proof, it could be conducted that $w_{d}(k-d_k^O+i)$ are bounded as:
\begin{equation}
    w_{d}(k-d_k^O+i)=\Delta_{d}(k-d_k^O+i)v(k).
\end{equation}
From the result in \cite{pan2023cloud}, a single operator ${\widetilde{\Delta}}_{d}(k)$ could be constructed to bound the group of the operators $\Delta_{d}(k-d_k^O+i)$, which could be indicated as:
\begin{equation}
    \left\|\Delta_{d}(k-d_k^O+i)\right\|_{\infty}\leq \left\|{\widetilde{\Delta}}_{d}(k)\right\|_{\infty}\leq 1 .
\end{equation}
Then, to present a compact representation of the equations, define the following output and input term as:
\begin{align}
    {\widetilde{v}}(k)=&\sum_{i=0}^{d_k^O-1}{A^{d_k^O-i-1}B}v(k), \\
   \widetilde{w}_d(k)=& \sum_{i=0}^{d_k^O-1}{A^{d_k^O-i-1}B}w_{d}(k-d_k^O+i),
\end{align}
and,
\begin{equation}
    \widetilde{w}_d(k) ={\widetilde{\Delta}}_{d}(k){\widetilde{v}}(k).
\end{equation}
Thus,  (\ref{y_pk_mid_6}) could be further conducted as:
\begin{align}
    y_{p}(k)=&\beta_1H_Ax(k-1)-\beta_1H_A\mu_7w_{7}(k)+\widetilde{w}_d(k)+\left(\beta_2H_B+\beta_3H_A B\right)u(k-1)-\frac{1}{2}\mu_3w_{3}(k)-\frac{1}{2}C_{d_k^O}^1\mu_4w_{4}(k) \nonumber\\
    &-\frac{1}{2}\mu_5w_{5}(k)-\frac{1}{2}C_{d_k^O}^1\mu_6w_{6}(k), \label{y_pk_final}
\end{align}
Besides, to present a compact dynamic of the augmented system, with (\ref{z_k_def}) and (\ref{y_delta_k_mid_2}), it could be conducted that:
\begin{align}
    x(k)=&Z(k)-\Upsilon u(k-1)+{\mu_2}w_{2}(k), \label{x_k_final} \\
    q(k)=&Z(k)-\Upsilon u(k-1)+{\mu_2}w_{2}(k)-x(k-1). \label{q_k_final}
\end{align}
Finally, with (\ref{Z_k_dyn_final})(\ref{x_k_final})(\ref{y_delta_k_final})(\ref{y_pk_final})(\ref{e_k_dyn_final}), the inter-connected system dynamic could be indicated as:
\begin{align}
   \left[ \begin{smallmatrix}
        Z(k+1)\\x(k)\\u(k)\\e(k+1)
    \end{smallmatrix}\right]=\left[\begin{smallmatrix}
        A+FK&0&0&-FK\\I&0&-\Upsilon &0\\ K&0&0&-K\\0&0&0&A-LA^{d_k^O}CA^{{-d}_k^O} \end{smallmatrix}\right]\left[\begin{smallmatrix}
            Z(k)\\x(k-1)\\u(k-1)\\e(k)\end{smallmatrix}\right] +\left[\begin{smallmatrix}
                \gamma E&\frac{\tau}{2}B&0&0&0&0&0&0&0&0&0\\
                0&0&0&0&0&\mu_2 & 0&0&0&0&0\\
                0&0&0&0&0&0&0&0&0&0&0\\
                \gamma E & \frac{\tau}{2}B & 0 &LA^{d_k^O}CA^{{-d}_k^O}\widetilde{\gamma}\widetilde{E}&0&0&0&0&0&0&0
            \end{smallmatrix}\right]\left[\begin{smallmatrix}
             w_{\Delta}(k)\\w_d(k)\\ \widetilde{w}_d(k)\\w_p(k)\\w_1(k)\\w_2(k)\\w_3(k)\\w_4(k)\\w_5(k)\\w_6(k)\\w_7(k)
            \end{smallmatrix}\right]
\end{align}
\begin{align}
    \left[\begin{smallmatrix}
        y_{\Delta}(k)\\v(k)\\ \widetilde{v}(k)\\y_p(k)\\v(k)\\v(k)\\v(k)\\v(k)\\v(k)\\v(k)\\q(k)\\
    \end{smallmatrix}\right]=&
    \left[\begin{smallmatrix}
        H_A&0&H_B-H_A\Upsilon&0\\
        K&0&-I&-K\\\beta_2BK&0&-\beta_2B&-\beta_2BK\\
        0&\beta_1H_A&\beta_2H_B+\beta_3H_AB&0\\ 
        K&0&-I&-K\\K&0&-I&-K\\K&0&-I&-K\\K&0&-I&-K\\K&0&-I&-K\\K&0&-I&-K\\I&-I&-\Upsilon &0
    \end{smallmatrix}\right]
    \left[\begin{smallmatrix}
        Z(k)\\x(k-1)\\u(k-1)\\e(k)
    \end{smallmatrix}\right]  \nonumber\\
    &+
    \left[\begin{smallmatrix}
        0&\frac{\tau H_B}{2} &0&0&-\mu_1H_B & \mu_2H_A &0&0&0&0&0\\
        0&0&0&0&0&0&0&0&0&0&0\\
        0&0&0&0&0&0&0&0&0&0&0\\
        0&0&I&0&0&0&-\frac{1}{2}\mu_3&-\frac{1}{2}C_{d_k^O}^1\mu_4&-\frac{1}{2}\mu_5&-\frac{1}{2}C_{d_k^O}^1\mu_6&-\mu_7\beta_1H_A\\
        0&0&0&0&0&0&0&0&0&0&0\\
        0&0&0&0&0&0&0&0&0&0&0\\
        0&0&0&0&0&0&0&0&0&0&0\\
        0&0&0&0&0&0&0&0&0&0&0\\
        0&0&0&0&0&0&0&0&0&0&0\\
        0&0&0&0&0&0&0&0&0&0&0\\
        0&0&0&0&0&\mu_2&0&0&0&0&0\\
    \end{smallmatrix}\right]
    \left[\begin{smallmatrix}
             w_{\Delta}(k)\\w_d(k)\\ \widetilde{w}_d(k)\\w_p(k)\\w_1(k)\\w_2(k)\\w_3(k)\\w_4(k)\\w_5(k)\\w_6(k)\\w_7(k)
            \end{smallmatrix}\right],
\end{align}
\begin{align}
    \left[\begin{smallmatrix}
             w_{\Delta}(k)\\w_d(k)\\ \widetilde{w}_d(k)\\w_p(k)\\w_1(k)\\w_2(k)\\w_3(k)\\w_4(k)\\w_5(k)\\w_6(k)\\w_7(k)
            \end{smallmatrix}\right]=
\left[
    \begin{smallmatrix}
        \Delta(k) &&&&&&&&&& \\&\Delta _d(k) &&&&&&&&&\\&&\widetilde{\Delta}_d(k) &&&&&&&&\\&&&\Delta_p(k)&&&&&&&\\&&&&\Delta_1(k) &&&&&&\\&&&&&\Delta_2(k)&&&&&\\&&&&&&\Delta_3(k)&&&&\\&&&&&&&\Delta_4(k)&&&\\&&&&&&&&\Delta_5(k)&&\\&&&&&&&&&\Delta_6(k)&\\&&&&&&&&&&\Delta_7(k)
    \end{smallmatrix}\right]
    \left[\begin{smallmatrix}
        y_{\Delta}(k)\\v(k)\\ \widetilde{v}(k)\\y_p(k)\\v(k)\\v(k)\\v(k)\\v(k)\\v(k)\\v(k)\\q(k)\\
    \end{smallmatrix}\right],
\end{align}
which completes the proof.
}
\newpage
\bibliographystyle{unsrt}

\end{document}